\def\be{\begin{equation}}
\def\ee{\end{equation}}
\def\ba{\begin{array}{c}}
\def\ea{\end{array}}
\newcommand{\bea}{\begin{eqnarray}}
\newcommand{\eea}{\end{eqnarray}}
\newcommand{\kkt}{\kt\!\kt}
\newcommand{\pkt}{\!\!\succ\,}
\newcommand{\kt}{\rangle}
\newcommand{\br}{\langle}
\newtheorem{thm}{Theorem}
\newtheorem{lemma}[thm]{Lemma}
\newenvironment{proof}{\noindent
 {\bf Proof.}}{\hfill$\square$\vspace{3mm}\endtrivlist}
\begin{document}

%TITLEPAGE .

\vspace{.35cm}

\begin{center}

{\Large

Theory of
response to perturbations in non-Hermitian systems
  using
five-Hilbert-space reformulation of unitary quantum mechanics

 }

\vspace{10mm}

\textbf{Miloslav Znojil}

\vspace{5mm}

Institute of System Science, Durban University of Technology,
P. O. Box 1334, Durban, 4000, South Africa

and

The Czech Academy of Sciences, Nuclear Physics Institute,
% Czech. Acad. Sci., Nucl. Phys. Inst.,
 Hlavn\'{\i} 130,
250 68 \v{R}e\v{z}, Czech Republic

%\vspace{3mm} Nuclear Physics Institute,
%
%250 68 \v{R}e\v{z}, Czech Republic
%
%{e-mail: znojil@ujf.cas.cz}
%
%%http://gemma.ujf.cas.cz/\~ { }znojil
%
%%\vspace{13mm}

\end{center}

%\vspace{5mm}

%\newpage

\section*{Abstract}

Non-Hermitian quantum-Hamiltonian-candidate combination $H_\lambda$
of a non-Hermitian unperturbed operator $H=H_0$ with
an arbitrary ``small'' non-Hermitian perturbation $\lambda W$ is given a
mathematically consistent unitary-evolution interpretation.
The formalism generalizes the conventional
constructive Rayleigh-Schr\"{o}dinger
perturbation expansion technique. It
is sufficiently general to take into account
the well known
formal ambiguity of the reconstruction of the correct
physical Hilbert space of states. The possibility of removal
of the ambiguity
via a complete, irreducible set of
observables is also discussed.

\subsection*{Keywords}.

  hidden Hermiticity;

  Hilbert space metric;

  size of perturbations;

  stability;

  PT symmetry;

  unitary quantum evolution;

%
%  elementary models;

%{\it a priori\,} selected
%\footnote{Necessary to add something about EPs}

\newpage

\section{Introduction}

``How much would a small change of a Hamiltonian influence the
measurable properties of a unitary quantum system?" This is the
question answered, in general terms, by perturbation theory
\cite{Kato}. In a less abstract setting it makes sense to
distinguish between the two alternative scenarios,
viz., Hermitian theory and
non-Hermitian theory. In our present paper we
intend to study a few less trivial questions emerging in the latter
context.

We feel motivated by the observation that the conventional textbooks
on quantum mechanics rarely leave the former, Hermitian-theory
framework \cite{Messiah}. For our present purposes we will call this
framework the ``single Hilbert space (1HS) formulation of quantum
mechanics'' because, typically,
the states are represented in a fixed, preselected Hilbert space
(say, ${\cal L}$). In the
so called Schr\"{o}dinger picture, the evolution of the wave
functions (denoted by the curly-ket symbol
$|\bullet\pkt$ in what follows) is then required controlled by
a suitable self-adjoint Hamiltonian (say,
$\mathfrak{h}=\mathfrak{h}^\dagger$), i.e., by
Schr\"{o}dinger equation
 \be
 {\rm i}\frac{d}{dt} |\psi(t)\pkt \,
 = \,\mathfrak{h}\,|\psi(t)\pkt\,,
 \ \ \ \ \ |\psi(t)\pkt \, \in \, {\cal L}\,.
 \label{tse}
 \ee
In the stationary
bound-state models the construction of the solutions then
degenerates to the diagonalization,
 \be
 \mathfrak{h}\,|\psi^{(n)}\pkt =
 E^{(n)}\,|\psi^{(n)}\pkt\,,
% \label{SEherm}
% \ee
% \be
  \ \ \ \
 |\psi^{(n)}\pkt \ \in \ {\cal L}\,,
 \ \ \ \ \ \ \
 n = 0, 1, \ldots\,.
 \label{selfadjo}
 \ee
Also the study of influence of perturbations is being performed,
predominantly, in the same 1HS framework. The reference is made to
the well known Stone theorem \cite{Stone} and the
self-adjointness is postulated also for the perturbations,
 \be
 \mathfrak{h}\ \to \
 \mathfrak{h}_\lambda=\mathfrak{h}+ \lambda
 \mathfrak{v}=\mathfrak{h}_\lambda^\dagger\,.
 \label{conve}
 \ee
Any non-Hermiticity is excluded as forbidden. The
perturbation-approximation estimates of energies $E_\lambda^{(n)}$
and/or of state vectors $|\psi_\lambda^{(n)} \pkt\, \in\, {\cal L}$
are then calculated using, e.g., the
Rayleigh-Schr\"{o}dinger power-series ansatzs \cite{Messiah}.

By Bender with Boettcher \cite{BB}
the apparently obvious constraint of self-adjointness (\ref{conve}) has
been claimed over-restrictive. In
sections \ref{prepu} and \ref{realth}, the
subsequent consistent amendment of the formalism will be
reviewed and summarized
(cf. also
the older references \cite{Dyson,Geyer}
and newer reviews \cite{Carl,ali,book,Ray} in this respect).
In this preparatory text we will summarize the
current state of art, and its abstract outline will be
complemented by a
schematic, ``relativistic'' illustrative example.

It is worth adding that in the current literature
such a version of the unitary
quantum theory
became widely known
under
the most popular names of the ````non-Hermitian but ${\cal
PT}-$symmetric'' \cite{Carl} {\it alias\,}
``pseudo-Hermitian'' \cite{ali} quantum
mechanics.
More explanatory than ``non-Hermitian'' would be, certainly,
``quasi-Hermitian'' \cite{Geyer,ali} or
``crypto-Hermitian'' \cite{Smilga}.
A few other comments on
the not yet unified
terminology may be also found in our papers \cite{NIP,corridors}.
Here, in a way proposed in
\cite{SIGMA} the
corresponding innovative but still strictly unitary amendment of
Schr\"{o}dinger picture
will be simply called
the ``three Hilbert space
(3HS) formulation of quantum
mechanics''.

In this framework
our key methodical observation
is that
the underlying
mathematics is
comparatively well understood only
under a tacit assumption that there are no
perturbations
which could enter the game.
In this situation
the quantum physicists
communicating with experimentalists and
working with the
non-Hermitian Hamiltonians
with real spectra
have two
options,
viz., the closed-system option, and the open-system
option.
In the latter case one
admits that
the
corresponding Schr\"{o}dinger equation
describes an open, unstable quantum system (such an option
and physical interpretation of non-Hermiticity are
often found advocated also
by mathematicians: see, e.g.,
Refs.~\cite{Viola,Trefethen}).

In the former case
(this option is
to be chosen and defended in what follows)
the physicists insist on the
unitarity of evolution and on the
hiddenly Hermitian
nature of
Hamiltonians.
Under these assumptions
one encounters the necessity of working, in the case of
any parameter-dependent crypto-Hermitian
Hamiltonian $H_\lambda$, with a {\em multiplet\,} of
the parallel, $\lambda-$numbered
versions of the 3HS formalism of course.

This is an apparent paradox in which
our current
research
found its motivation.
A clarification of this paradox, i.e.,
one of the main aims of our present paper will be
formulated in
section \ref{5HS}.
In particular, in paragraph \ref{PRSt}
we will explain that
in the perturbed case
it is sufficient to
keep the parameter fixed, and to
deal just with the
two (viz.,
$\lambda\neq 0$ and $\lambda=0$)
separate versions of the 3HS formalism.
In the subsequent paragraph \ref{PRSto} we describe
the merger of the two 3HS pictures which yields
the ultimate ``five Hilbert space
(5HS) formulation of quantum
mechanics''.

The amended and simplified final version of our general
non-Hermitian perturbation theory
is described in section \ref{ctyri}.
It is characterized by
a reduction of the
doublet of
the two  (viz.,
$\lambda\neq 0$ and $\lambda=0$) ``friendliest''
Hilbert spaces to the single
one,
with
$\lambda=0$.
In this space (denoted by the dedicated symbol ${\cal K}$)
the set of the ``relevant'' perturbations of $H$ is finally
introduced as the most natural form of the
dynamical-input information about the
system.

A few related technical results are then
developed and discussed in section
\ref{octyri}. Emphasis is put upon
a
simplification of the scheme based on a formal
return to a single
auxiliary Hilbert space,
and upon the study of
properties of the new, ``effective'' physical metric.
A few explicit aspects of the
construction of the
perturbation-expansion corrections
are outlined, and their
recurrent nature is underlined.
In the resulting
analogue and extension of the standard
Rayleigh-Schr\"{o}dinger perturbation-expansion recipe, the number
of the relevant Hilbert spaces will be found reduced back to three
again.

In the next section number \ref{fifi}
we will discuss a few practical aspects and
consequences of the formalism,
re-emphasizing that after the innovation, multiple
phenomenological
merits provided by the conventional
perturbative model-building strategies still remain unchanged.
In particular, we add a few more comments
on the difference between the open and closed quantum systems
(cf.
subsection \ref{opclo}) or on the problem of stability
(cf.
subsection \ref{sxxx}), etc.

In the last section \ref{susu}, our present upgrade of the
perturbation expansion techniques will be summarized.

%\newpage

\section{Non-Hermitian Hamiltonians and unitarity in disguise
\label{prepu}}

\subsection{Realistic Hamiltonians in non-Hermitian representation}

Since the Dyson's study of ferromagnetism \cite{Dyson},
since the
\v{C}\'{\i}\v{z}ek's coupled-cluster calculations in
quantum chemistry \cite{Cizek},
and since the successful
variational
interacting-boson-model
evaluations of energy levels in heavy
nuclei \cite{Jensen}
it has widely been accepted that at least
some of the most critical technical obstacles emerging in
multiple nontrivial
calculations can be circumvented when one weakens, in suitable
manner, the Hermiticity assumption.
A way towards such an innovation has been found to
lie in a transfer of
Eqs.~(\ref{tse}) and (\ref{selfadjo}) from the (by assumption,
overcomplicated) ``inaccessible'' textbook Hilbert space ${\cal L}$
to its unitarily non-equivalent
Hilbert-space alternative (say, ${\cal K}$).
For this purpose Dyson recommended to
pick up a suitable invertible operator $\Omega$
(carrying, typically, information about correlations)
leading to a ``prediagonalization'' of a given
realistic but overcomplicated
Hamiltonian $\mathfrak{h}=\mathfrak{h}^\dagger$,
 \be
 \mathfrak{h}=\Omega\,H\,\Omega^{-1}\,,
 \ \ \ \ \
 \Omega^\dagger\Omega=\Theta \neq I
 \,.
 \label{hait}
 \ee
In the formally
equivalent
but, presumably, simplified
time-dependent Schr\"{o}dinger equation
 \be
 {\rm i}\frac{d}{dt} |\psi(t)\kt \,
 = \,H\,|\psi(t)\kt\,,
 \ \ \ \ \ |\psi(t)\kt \, \in \, {\cal K}\,,
 \ \ \ \ H \neq H^\dagger\,,
 \label{tsetse}
 \ee
as well as in its time-independent complement
 \be
 H\,|\psi^{(n)}\kt =
 E^{(n)}\,|\psi^{(n)}\kt\,,
% \label{SEherm}
% \ee
% \be
  \ \ \ \
 |\psi^{(n)}\kt \ \in \ {\cal K}\,,
 \ \ \ \ \ \ \
 n = 0, 1, \ldots\,
 \label{selfadjojo}
 \ee
such a transformation (also known, in numerical mathematics, as
``preconditioning'') can only be considered meaningful if it leads
to a truly perceivable simplification of the diagonalization.

\subsection{Unperturbed three Hilbert space flowchart}

The
new, isospectral form $H$ of the Hamiltonian must prove
user-friendlier. In the most successful implementations
the optimal, Dyson-recommended non-unitary
choice of mapping $\Omega$
is essential.
Naturally, the methodical gain provided by prediagonalization
$\mathfrak{h}\to H\,$ is accompanied by the loss
represented by the emergence of the non-Hermiticity of $H$
since the self-adjointness of $
\mathfrak{h}=\mathfrak{h}^\dagger$ in ${\cal L}$ gets
replaced by the more complicated
quasi-Hermiticity relation in ${\cal K}$,
 \be
 H^\dagger \Theta=\Theta\,H\,.
 \label{dudu}
 \ee
For this reason, operator
$\Theta=\Omega^\dagger\Omega$
is usually called the physical Hilbert-space metric~\cite{Geyer}.

In its abstract form the trade-off has fiercely been
criticized by Dieudonn\'{e} \cite{Dieudonne} (cf. also the most
recent warnings in \cite{Viola,ATbook}). Only the acceptance of a
few additional restrictive technical assumptions made the
recipe mathematically
satisfactory and consistent (for example, the strongly
counterintuitive boundedness of {\em all\,} of the operators of
observables was required in \cite{Geyer}).
Strictly speaking, some of the latter technical constraints may be
given a natural form when one re-qualifies the
Dyson's operator $\Omega$ as a map which connects the space ${\cal
L}$ of textbooks with another, auxiliary Hilbert space ${\cal K}$.
This proves useful, e.g.,  in the so called interacting boson models
of nuclei where the
antisymmetry of wave functions imposed
in the fermionic Fock space ${\cal
L}$
is being replaced by
the boson-space statistics in ${\cal K}$
\cite{Jensen}.
In a way reflected by
the notation this enables us to write
 \be
 |\psi\pkt = \Omega\,|\psi\kt\,,
 \ \ \ \ |\psi\pkt\ \in \ {\cal L}\,,
 \ \ \ \  |\psi\kt\ \in \ {\cal K}\,.
 \ee
Such a notation convention simplifies also another necessary
change of the Hilbert space, viz., ${\cal K} \ \to \ {\cal H}$.
This move
which completes the 3HS scheme
is easier because one merely amends the inner product,
 \be
 \br \psi_a|\psi_b\kt_{({\cal H})}=
 \br \psi_a|\Theta|\psi_b\kt_{({\cal K})}\,.
 \ee
The key purpose is that the two Hilbert spaces ${\cal L}$ and ${\cal
H}$ now become, from the point of view of making experimental
predictions, equivalent (cf., e.g., reviews \cite{Geyer,ali,MZbook}
for more details).

%Simplification {\it versus} the loss of Hermiticity Direct and inverted c

The Dyson-inspired constructive 3HS recipe proceeds
from a
realistic self-adjoint
$\mathfrak{h}$ and from a preselected $\Omega$  to
a friendlier $H$. Among the practical shortcomings of such a
``direct'' recipe one must underline that in  ${\cal K}$ the
upper-case
Hamiltonians with property (\ref{dudu})
are non-Hermitian. This
means that the numerical
diagonalization algorithms may be less efficient
\cite{Wilkinson}. Still, many practical applications of the
Dyson-inspired 3HS recipe \cite{Jensen} were successful while having
the following three-step structure:
 \be
  %\vspace{-1cm}
  \ba
  \\
   %\vspace{-0.3cm}\\
 \begin{array}{|c|}
  \hline
 \vspace{-0.35cm}\\
   {\rm in\ the\ first\ step,\ a\ unitary\ quantum\ system}\\
 {\rm with\ a\ conventional\ }
{\rm   Hamiltonian}\
 \mathfrak{h}=\mathfrak{h}^\dagger
 %=\Omega H\Omega^{-1}
  \\
  {\rm defined\ in\ a \ \fbox{\rm {textbook Hilbert space  ${\cal L}$}}}\\
 {\rm is\ found}\
  {\bf  prohibitively\ complicated;}\\
  {\rm it\ is\  discarded;\ simplifications \ are\ \rm sought;}\\
 \hline
 \ea
 \\
 \ \ \ \ \ \
\stackrel{{ \bf action:\ preconditioning\ } }{}
  \swarrow\ \  \  \ \ \ \ \ \
 \ \ \ \ \ \ \ \
 \ \ \ \ \ \ \ \
 \ \ \ \ \ \ \ \
 \ \ \ \ \  \nwarrow\!\!\!\searrow\
 \stackrel{\bf comparison:\  the\ same\ physics}{}
 \\
    \begin{array}{|c|}
 \hline
 \vspace{-0.35cm}\\
   {\rm in\ the\ second\ step, \ calculations}\\
 {\rm are\  made\ feasible\ using\ an } \\
   \fbox{\rm {auxiliary Hilbert space  ${\cal K}$}}{ \rm \ and}\\
     {\bf   isospectral\  } H=\Omega^{-1}\mathfrak{h}\
     \Omega \neq H^\dagger\\
   {\rm   (simplification \ is\ achieved);}
 \\
  \hline
 \ea
 \stackrel{ {\bf  rehermitization}  }{ \longrightarrow }
 \begin{array}{|c|}
 \hline
 \vspace{-0.35cm}\\
   {\rm in\ the\ last\ step,\ the \ same\ } H \\
  {\rm {is\ made\   selfadjoint\ {\rm via \ inner\ product}}}
 \\
  {\rm  \ metric  } \ \Theta=\Omega^\dagger\Omega \neq I\ {\rm yielding} \\
  {\rm  \fbox{\rm {the standard Hilbert space  ${\cal H}$}}}\\
   {\rm (predictions\ are\ rendered\ possible).\,}
 \\
 \hline
 \ea\\
 \\
 \ea
 \label{trihs}
 \ee
Typically, as we already emphasized,
the fermionic, Pauli-principle-controlled Fock-space
choice of ${\cal L}$ and the simpler, bosonic-space choices of
${\cal K}$ and ${\cal H}$ together with the Hermiticity of
 \be
 H  = \Theta^{-1}H^\dagger\Theta := H^\ddagger
 \ee
in ${\cal H}$ opened the way towards the
unitary equivalence between the two alternative physical Hilbert
spaces ${\cal H}$ and ${\cal L}$.

\section{Inverted unperturbed flowchart\label{realth}}

% $\Theta=\Theta(H)$

\subsection{The necessity of the reconstruction of
metric\label{yamb}}

Once we decide to start, directly, from a suitable $H$ in ${\cal
K}$, we have to follow the recipes by Scholtz et al \cite{Geyer}, or
by Bender with coauthors \cite{Carl}. The necessary process of the
Hermitization ${\cal K} \to {\cal H}$ has to involve not only $H$
but also, possibly, the knowledge of a quasi-parity ${\cal Q}$
\cite{quasi}, of a charge ${\cal C}$ \cite{Carl}, or of any other
``compatible'' \cite{arabky} candidate $\Lambda$ for an observable
\cite{Geyer}. The ultimate aim of the calculations then becomes the
reconstruction of metric $\Theta(H)$  \cite{lotor}. Naturally, the
formalism still can provide all of the relevant experimental
predictions as well as a consistent picture of physical reality, in
principle at least. In practice, the evaluation of predictions will
be guided by the following inverted triangular diagram,
 \be
  %\vspace{-1cm}
  \ba
   %\vspace{-0.3cm}\\
    \begin{array}{|c|}
 \hline
 \vspace{-0.35cm}\\
   {\rm step \ (A)\!:}\\ {\rm one\ picks\ up\ an\ {unphysical} \ but}\\
  {\rm  \fbox{\rm {{\bf user-friendly} Hilbert space  ${\cal K}$}}\ and\ a}\\
     {\rm \bf non\!-\!Hermitian\  } H\
     {\rm with\ real\ spectrum,}\\
    {\rm {\it i.e.,}\  a\ \bf {\it bona\ fide} \  Hamiltonian};
 \\
  %\vspace{-0.3cm}\\
  \hline
 \ea
 \stackrel{ {\bf  hermitization}  }{ \longrightarrow }
 \begin{array}{|c|}
 \hline
 \vspace{-0.35cm}\\
   {\rm step\ (B)\!:}\\ {\rm  one \ constructs\ an\ eligible } \
  {\rm metric\ } \\
  \Theta=\Omega^\dagger\Omega \ \ ({\rm s.\ t. }\
   H^\dagger \Theta=\Theta
  H),\ i.e.,\\
  {\rm \fbox{\rm {{\bf physical} Hilbert space  ${\cal H}$}}}\
   {\rm in \ which }\\
   {\rm the\ evolution\ becomes\ \bf unitary;} \\
   %
%  {\rm {is\ made\   selfadjoint\ {\rm via }}}
% \\
 \hline
 \ea\\
 \ \ \ \ \ \ \ \
%\stackrel{{ \bf the\ Dyson's\ correspondence} }{}
%  \ \nwarrow\!\!\!\searrow\ \  \  \ \ \ \ \ \
 \ \ \ \ \ \ \ \
 \ \ \ \ \ \ \ \
 \ \ \ \ \ \ \ \
 \ \ \ \ \ \ \ \
 \ \ \ \ \  \nearrow\!\!\!\swarrow\
 \stackrel{\bf  equivalent\ predictions}{}
 \\
 \begin{array}{|c|}
  \hline
 \vspace{-0.35cm}\\
   {\rm step\ (C)}\!:\\
  {\rm {for\ a}\ clearer \ {\rm physical}\ interpretation} \\
  {\rm  one \ may\ reconstruct\ also\ the\  conventional }\\
 {\rm  Hamiltonian }
 \
 \mathfrak{h}
 =\Omega H\Omega^{-1}=
 \mathfrak{h}^\dagger\ {\rm in\ the}\\
  {\rm  \fbox{\rm {{\bf textbook} Hilbert space  ${\cal L}$}}.}\\
 \hline
 \ea
 \\
 \ea
 \label{humandiag}
 \ee
This characterizes the theoretical construction flowchart used, in
particular, when one speaks about the ``non-Hermitian but ${\cal
PT}-$symmetric'' Hamiltonians $H$ \cite{Carl}, or about the
${\cal P}-$pseudo-Hermitian Hamiltonians $H$ \cite{ali}, etc.
An explicit two-by-two matrix illustration of
the backward assignment $H \
\to \ \Theta(H)$ will be provided in the following subsection.

\subsection{The ambiguity of the reconstruction of
metric\label{ambb}}

Operator $\Theta$ is, in general, non-universal and
Hamiltonian-dependent. Its use is easily shown to convert the
conventional Hermiticity $\mathfrak{h}=\mathfrak{h}^\dagger$ of
Hamiltonian (\ref{hait}) in ${\cal L}$ into the formally equivalent
quasi-Hermiticity property
 \be
 H = \Theta^{-1} H^\dagger\Theta\,
 \label{cryher}
 \ee
of its non-Hermitian preconditioned
isospectral partner in ${\cal K}$.

A suitable elementary but still sufficiently realistic $2N$ by
$2N$-matrix example of the formalism may be found, e.g., in
Ref.~\cite{Semoradova}. The physical contents of the theory has been
illustrated there, in the context of relativistic quantum mechanics,
using the discrete version of the free Klein-Gordon Hamiltonian
$H^{(KG)}$. Even the first nontrivial special case of this model,
with $N=1$, offers a nontrivial but exactly solvable two-level
quantum system characterized by the real two-by-two-matrix
Hamiltonian
 \be
 H=H^{(KG)}(\tau)=
 \left[ \begin {array}{cc} 0&\exp 2 \tau\\
 \noalign{\medskip}1&0\end {array} \right]
 \neq H^\dagger\,.
 \label{primo}
 \ee
The dynamics is controlled here by a single real parameter $\tau \in
(-\infty,\infty)$. The toy-model Hamiltonian (\ref{primo}) is ${\cal
PT}-$symmetric, with antilinear ${\cal T}$ representing Hermitian
conjugation, and with the Pauli-matrix parity ${\cal P}=\sigma_x$.
In spite of the manifest non-Hermiticity of $H^{(KG)}(\tau)$ at
$\tau \neq 0$, the bound-state energies are real and non-degenerate,
$E= E_\pm=\pm \exp \tau$. The model is suitable for illustrative
purposes because the list of its properties survives a transition to
the discrete free Klein-Gordon Hamiltonians with arbitrary matrix
dimensions $2N$ (cf. \cite{jaKG}).

Every element $H^{(KG)}$ of the latter family is non-Hermitian but
it still has been shown to admit the conventional probabilistic
interpretation. This goal is achieved in the third Hilbert space
${\cal H}$ in which, by construction, the eigenstates of the
upper-case Hamiltonian $H$ really do acquire their conventional
probabilistic physical interpretation of wave functions. In
comparison, the role of the friendlier
Hilbert space ${\cal K}$ (with non-amended inner product) remains
just that of a mathematical playground.

The construction of the correct physical inner product (i.e., of the
non-equivalent space ${\cal H}$) is a key to the phenomenological
acceptability of the model. For illustrative Eq.~(\ref{primo}) the
space ${\cal K}={\cal K}^{(KG)}$ is the real two-dimensional vector
space. In such a model it is easy to show that the Hamiltonian
$H^{(KG)}$ will satisfy the ``hidden'', ${\cal H}-$space-Hermiticity
relation (\ref{cryher}) if and only if the following, entirely
general amended form
 \be
 \Theta=\Theta^{(KG)}(\tau,\beta)=
 \left[ \begin {array}{cc} \exp (-\tau)&\beta\\
 \noalign{\medskip}\beta&\exp \tau\end {array} \right]
 =\Theta^\dagger\,
 \label{secuf}
 \ee
of the metric is used upgrading ${\cal K}^{(KG)}$ into our ultimate
physical Hilbert space ${\cal H}^{(KG)}$.

Matrix (\ref{secuf}) contains also another, independent real
variable $\beta$ which must keep it positive definite \cite{Geyer},
i.e., which must be such that $|\beta|<1$. Its variability reflects
the generic ambiguity of the assignment $H^{(KG)} \to \Theta^{(KG)}
\to {\cal H}^{(KG)}$. The free variability of such a ``redundant''
parameter must be kept in mind as an inseparable, characteristic
part of the 3HS formulation of quantum theory \cite{SIGMAdva}.

According to review paper \cite{Geyer} there exists the
possibility of a systematic, physics-based removal of the ambiguity
mediated by the completion of the set of the so called irreducible
observables (say, $\Lambda_j$, $j=1,2, \ldots, j_{\max}$).
In our most elementary real-matrix model (\ref{primo})
leading to the single-parameter
ambiguity in metric
(\ref{secuf}), for example, one would need just $j_{\max}=1$,
i.e., just one
``missing''  real-matrix candidate for observable
 \be
 \Lambda=\left[ \begin {array}{cc} a&b\\
 c&d\end {array} \right]\,.
  \label{desecuf}
 \ee
As long as its observability property
$\Lambda^\dagger \Theta^{(KG)}(\tau,\beta)
=\Theta^{(KG)}(\tau,\beta) \Lambda$
forms the set of four linear algebraic equations,
we have to distinguish between
the solutions with $d=a$ and $d \neq a$.
As long as the former case (of which the Hamiltonian
itself is the special case with $a=0$)
does not determine the value of $\beta$, all of
the eligible candidates $\Lambda$
belong to the latter case. For all of them
our set of four equations has the solution
 \be
 \beta=\beta(a,b,c,d,\tau)=
 \frac{c\,\exp (\tau)-b\,\exp (-\tau)}{d-a}\,.
 \label{resusu}
 \ee
Thus, we proved the following elementary but
instructive result.
\begin{lemma}
Once we guarantee that
the four eligible real parameters in the
``additional dynamical input'' observable (\ref{desecuf})
are such that $\ |\beta(a,b,c,d,\tau)|<1$ in Eq.~(\ref{resusu}),
the corresponding
correct physical Hilbert-space metric (\ref{secuf})
becomes unique.
\end{lemma}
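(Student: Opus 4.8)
The plan is to reduce this uniqueness assertion to a short, explicit linear-algebra computation, exploiting the fact that every object here is a real two-by-two matrix. First I would substitute the candidate observable (\ref{desecuf}) and the general metric (\ref{secuf}) into the observability (compatibility) relation $\Lambda^\dagger \Theta^{(KG)}(\tau,\beta)=\Theta^{(KG)}(\tau,\beta)\,\Lambda$. Because $\Lambda$ is real, $\Lambda^\dagger$ is simply its transpose, and because $\Theta^{(KG)}$ is symmetric, both products $\Lambda^\dagger\Theta^{(KG)}$ and $\Theta^{(KG)}\Lambda$ come out as explicit matrices whose entries are linear (indeed affine) in $\beta$.

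Next I would compare the four scalar entries. The structural observation that makes the result work — and the step I would emphasize — is that the two diagonal equations are satisfied identically (both sides reduce to $a\exp(-\tau)+c\beta$ and to $b\beta+d\exp\tau$ respectively), so they carry no information, while the two off-diagonal equations collapse to one and the same relation, namely
\be
 (d-a)\,\beta = c\,\exp(\tau) - b\,\exp(-\tau)\,.
\ee
Hence the advertised ``set of four linear algebraic equations'' has rank at most one in the unknown $\beta$, and everything hinges on this single surviving constraint.

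From here the dichotomy flagged in the text is immediate. If $d=a$ the left-hand side vanishes, and the relation is either vacuous (when the right-hand side also vanishes — exactly the Hamiltonian-like family, with $H$ itself the case $a=d=0$) or else inconsistent; in neither subcase is $\beta$ fixed, so such $\Lambda$ contribute no new dynamical information. If instead $d\neq a$ I may divide and obtain the single admissible value $\beta=\beta(a,b,c,d,\tau)$ of Eq.~(\ref{resusu}). Finally, invoking the hypothesis $|\beta(a,b,c,d,\tau)|<1$ guarantees that this uniquely determined $\beta$ keeps the metric (\ref{secuf}) positive definite — its determinant being $1-\beta^2>0$ with positive diagonal entries $\exp(\pm\tau)$ — so a legitimate metric survives and it is unique, which is precisely the claim.

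I do not anticipate a genuine obstacle: the computation is short, and the only point demanding care is the bookkeeping of the degenerate branch $d=a$, where one must argue that the residual equation imposes \emph{no} constraint on $\beta$ rather than mistakenly reading off uniqueness. Verifying that the Hamiltonian (\ref{primo}) itself falls precisely into that branch — and therefore, consistently with the ambiguity discussed around (\ref{secuf}), cannot by itself pin down $\beta$ — is the natural sanity check that closes the argument.
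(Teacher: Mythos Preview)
Your proposal is correct and follows essentially the same route as the paper: the paper's argument (given in the paragraph immediately preceding the lemma) also writes the observability condition $\Lambda^\dagger\Theta^{(KG)}=\Theta^{(KG)}\Lambda$ as four linear equations, separates the $d=a$ and $d\neq a$ branches, notes that the Hamiltonian falls into the former and leaves $\beta$ undetermined, and extracts Eq.~(\ref{resusu}) from the latter. Your version is merely more explicit about which two of the four scalar equations are identically satisfied and why the remaining two coincide, but the underlying argument is the same.
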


\section{Five-Hilbert-space formulation of quantum mechanics
\label{5HS}}

%\subsection{Motivation}

The three-Hilbert-space (3HS) formulation of quantum mechanics
as reviewed in the preceding section
offers an explanation why,
under certain general mathematical assumptions, any given
parameter-independent
quasi-Hermitian Hamiltonian $H$
with real spectrum
can play the role of
the generator of quantum evolution
which is unitary.
In essence, the explanation
is based on the
clarification
that only the two Hilbert spaces in question (viz., ${\cal L}$ and
${\cal H}$) are physical (i.e., unitarily equivalent) while the third
Hilbert space ${\cal K}$
(in which calculations are being performed)
yields unphysical
values of inner products.

\subsection{Parameter-dependent 3HS theory \label{PRSt}}

In the overall 3HS framework the
correct physical interpretation of a quantum system in question
becomes much more complicated when the upper-case Hamiltonians
are allowed to vary with a parameter \cite{arabky}.
Even the inspection of our
most elementary $\tau-$dependent toy model (\ref{primo})
reveals that the change of the parameter
may modify the
physical Hilbert-space metric.
Moreover,
even at a fixed $\tau$
the specification of the metric remains ambiguous
-- notice that its
form (\ref{secuf}) varies with another free parameter $\beta$.
After a change of $\tau$, our choice of $\beta$
may be different, $\beta=\beta(\tau)$.

For an arbitrary
crypto-Hermitian 3HS
Hamiltonian (or, if needed, for the
operators of arbitrary other
observable quantities)
both of the latter remarks are truly essential.
In particular, they are essential for an appropriate
and consistent introduction of the concept of perturbations
of the Hamiltonians in Schr\"{o}dinger picture \cite{Ruzicka}.
Indeed, any deviation
from the ``unperturbed'' scenario, even with a ``very small''
parameter $\lambda$ in
 \be
 H \ \to \ H_\lambda=H+{\cal O}(\lambda) \neq H^\dagger_\lambda\,
 \label{conconve}
 \ee
makes it necessary to reopen the question of the
applicability of the conventional Rayleigh-Schr\"{o}dinger
power series in $\lambda$.

Our present main task is to demonstrate
why the perturbed forms (\ref{conconve})
of the 3HS-related non-Hermitian upper-case
Hamiltonians really require a deep change of
the conventional perturbation-approximation approaches.
We will show under which conditions these deeply anomalous
models can still be made a part of a
consistent quantum theory of unitary evolution.

Let us start by recalling that whenever one alters the Hamiltonian
(say, by making it, via Eq.(\ref{conconve}), parameter-dependent),
the parameter-dependence becomes inherited
also by the related Hilbert-space triplet.
One of the most important consequences of the generalization
 \be
 \{ {\cal L}, {\cal K}, {\cal H}
 \}
 \ \to \
 \{ {\cal L}_\lambda, {\cal K}_\lambda, {\cal H}_\lambda
 \}
 \label{parade}
 \ee
is that
the
standard concept of the smallness of
perturbations
retains its intuitively acceptable physical meaning only
in the ``direct'' flowchart scheme of Eq.~(\ref{trihs}), i.e.,
only in the textbook Hilbert space ${\cal L}$.
The principle of correspondence enables us to
quantify there what is a ``small'' and ``large''
perturbation term in (\ref{conve}).
No such a guiding principle survives the transition to the
upper-case perturbed Hamiltonians.

The key point is that in the crypto-Hermitian picture
even the very physical
interpretation of the observable quantities becomes
parameter-dependent (cf. Ref.~\cite{Geyer} or Eq.~(\ref{parade})).
We know (see paragraph \ref{ambb}
or our recent comment \cite{lotor}) that
in the
``inverted'' 3HS
flowchart (\ref{conconve})
one encounters also
the
emergence of new
free parameter(s) in metric. In other words,
the ambiguity is enhanced.
This implies the necessity of
selection of a
method of the
removal of the formal ambiguity
of the metric. The reconstruction of physics
(i.e., in essence, a return to
${\cal L}$) has simply the
meaning of a return to the principle of correspondence.

\subsection{A merger of two Hilbert-space triplets requiring
${\cal L}_\lambda={\cal L}$\label{PRSto}}

The Dieudonn\'{e}'s analysis of the possible formulation of quantum
mechanics covering the general unbounded quasi-Hermitian
Hamiltonians \cite{Dieudonne} did not find, due to its discouraging
sceptical tone, too many applications in physics. The idea was only
accepted and became popular after Bender with coauthors
\cite{BB,Carl} proposed the restriction of the class of
non-Hermitian Hamiltonians with real spectra
to a ``tractable'' subset
exhibiting
${\cal PT}$ symmetry
(i.e., antilinear parity-times-time-reversal
symmetry).
With multiple technical details covered by
reviews (e.g., \cite{Carl,ali}) and books (e.g.,
\cite{book,Carlbook}), the key message was that the
${\cal PT}$ symmetry assumption
leads to a thorough simplification of the
technicalities.

In the more general setting, without a direct reference to
${\cal PT}$ symmetry, a thorough review of the eligible
techniques of construction of the metric may be found in \cite{ali}.
One has to admit that even when a given Hamiltonian $H$ with real
spectrum is successfully assigned a suitable ``Hermitizing'' metric
$\Theta=\Theta(H)$, the result is heavily ambiguous
\cite{lotor,SIGMAdva} plus, in any case, feasible just for the
sufficiently elementary models (cf., e.g., \cite{117,david}).

One is
forced to expect that the difficulties will further grow if the
initial Hamiltonian $H$ gets perturbed. Even the simplifications
resulting from the assumptions of ${\cal PT}$ symmetry may fail to
apply when one tries to study the role and impact of perturbations.
For example, a
constructive disproof of applicability of the perturbation
expansions of bound states using the conventional
Rayleigh-Schr\"{o}dinger power-series ansatzs using the
{\it bona fide\,} ``small''
modification (\ref{conconve}) of the Hamiltonian
near an exceptional-point singularity was given in
\cite{admissible}.
In our present paper, therefore, we will assume that the
unperturbed Hamiltonian $H$
is safely diagonalizable and lying
sufficiently far from the influence of the possible
exceptional-point-related pathologies.

Once we pick up some non-vanishing value of $\lambda \neq 0$
in the perturbed Hamiltonian,
the most natural
(and, in fact, the only properly interpreted)
merger of this case with its unperturbed, exactly solvable
$\lambda=0$
predecessor may be realized within the
textbook Hilbert space ${\cal L}$.
In this sense we may identify ${\cal L}_\lambda$ with ${\cal L}$.
At this instant
our theoretical analysis of perturbations
may start by a
return to the unperturbed Hamiltonian $H$ (defined in ${\cal K}$)
and to the related (and, presumably, also explicitly known)
physical metric $\Theta$ in Hilbert space ${\cal H}$.
Subsequently, an additional,
methodically motivated assumption of our knowledge of the
Dyson-map factors $\Omega$ of the metric would
open the way towards the reconstruction of the
unperturbed physics in ${\cal L}$.
In the latter space we may
introduce the perturbations, purely formally,
by Eq.~(\ref{conve}).
This completes the
merger of the unperturbed ``inverted''
3HS diagram  (\ref{humandiag})
with its Dyson-like ``direct'' 3HS
perturbation-containing
extension  (\ref{trihs}).
The following, most natural five-Hilbert-space (5HS) perturbation-theory
flowchart
is obtained as a result,
 \be
  %\vspace{-1cm}
  \begin{array}{l}
  \ \ \ \ \ \ \ \ \ \
  \ \  \  \ \ \ \ \ \
  \ \  \  \ \
 \begin{array}{|c|}
 \hline
 {\rm  \bf user\!-\!friendly \  space,}\\
 {\rm  exact\ solvability \ \rm in}\
 \fbox{$\cal K$},\\
 {\rm Hermiticity\ lost,}\
  H \neq H^\dagger
 \\
  \hline
 \ea
 \stackrel{}%({{\lambda}}=0)}%\longleftrightarrow  }
 { \stackrel{\,(\Theta \to I)\ {\bf }}{\longleftarrow } }
 \begin{array}{|c|}
 \hline
   {\bf unperturbed\ system} \\
 {\rm  {Hermiticity}\ of}\ H \ {\rm    in}\
  \fbox{$\cal H$}\!:
 \\
    H=\Theta^{-1}
 H^\dagger\,
 \Theta =H^\ddagger\,
 %, \Theta
% =\Omega^\dagger\Omega%
% =
% {\rm given}
 \\
 \hline
 \ea
\\ \ \ \
  \ \  \  \ \ \ \ \ \
 % {\rm  \fbox{\bf {S}$_0$\ picture}}\\
 %
 \stackrel{{\rm unperturbed \  map}\ \Omega}
 \ \ \ \
  \swarrow\ \  \  \ \ \ \ \ \
 \ \ \ \ \ \ \ \
 \ \ \ \ \ \ \ \
  \ \  \  \ \ \ \ \ \
  \ \  \  \ \ \ \ \ \
 \ \ \ \ \  %\swarrow\!\!\!\nearrow\
 %\stackrel{{\bf (representation\ of\ interest)}}{}
 \\
    \begin{array}{|c|}
   \hline
 {\rm \ unperturbed}\
  \mathfrak{h}
 =\Omega H\Omega^{-1}=\mathfrak{h}^\dagger
  \\
   {\rm in\ user\!-\!unfriendly\ }
  {\rm  space\ }\fbox{$\cal L$}
 \\
  \hline
    \end{array}\\
    %\ \ \ \ \ \ \ \ \
%    \ \ \ \ \ \ \ \ \
   %\ \ \ \ \ \ \ \ \
    \ \ \ \
    %\|\parallel\amalg\bullet\Updownarrow
 \stackrel{{\rm (the\ two\ spaces\ coincide) }}
 \ \ \ \parallel
    %\downarrow
    \\
    \begin{array}{|c|}
 %\hline
 \hline
  %{\rm  \underline{small}\ }
  {\rm perturbed} \
   %{{\lambda}}\,\mathfrak{v}_{{\lambda}}
   % \\
 %\mathfrak{h}+{{\lambda}}\,\mathfrak{v}_{{\lambda}}
 \mathfrak{h}_{{\lambda}}=\Omega_{{\lambda}} H_{{\lambda}}\Omega^{-1}_{{\lambda}}
 =\mathfrak{h}_{{\lambda}}^\dagger
 %\mathfrak{h}_0+{{\lambda}} \mathfrak{v}
  %=\Omega H\Omega^{-1}
  \\
   {\rm defined\ in\ } \underline{\rm  the \ same}\
  {\rm  space\ }\fbox{$\cal L$}
 \\
  % \hline
 % \ {\rm  perturbation\ recipes\    not\ feasible\ } \ \\
 \hline
 \ea
  \ \  \  \ \ \ \ \ \
  \ \  \  \ \ \ \ \ \
  \ \  \  \ \ \ \ \ \
 \\
  \ \  \  \ \ \ \ \ \
  %\ \  \  \ \
 %\stackrel{ {\lambda}-{\bf dependent }\ \Omega_{{\lambda}}^{}}{}
 \stackrel{ \lambda-{\rm dependent\ map }\ \Omega_{{\lambda}}^{-1}}
 %\stackrel{{{\lambda}}\neq 0\ {\bf  mathematics}}{}{\rm perturbed, }\
 \ \ \ \
  \searrow\ \  \  \ \ \ \ \ \
 \ \ \ \ \ \ \ \
  \ \  \  \ \ \ \ \ \
 \ \ \ \ \ \ \ \
  \ \  \  \ \ \ \ \ \
 \ \ \ \ \
 %\searrow\!\!\!\nwarrow\(unperturbed \  system)
 %\stackrel{{{\lambda}}\neq 0\ {\bf  physics}}{}\\
 \stackrel{{\bf }}{}\\
  \ \  \  \ \ \ \ \ \
   \ \ \ \ \ \
  \ \  \  \
  \begin{array}{|c|}
 \hline
 %\vspace{-0.35cm}\\
 {\rm  {\bf perturbed\ Hamiltonian} }\\
 {\rm in}\
  {\lambda}-{\rm dependent\ space}\
 \fbox{${\cal K}_{{\lambda}}$},\\
 %\hline
 {\rm   ``tractable" }\
 H_{{\lambda}}%=H+\lambda\,W
  %\neq H^\dagger_{{\lambda}};
  %\ {\rm defined}
 %\Omega^{-1}  \mathfrak{h}\Omega
  \\
% \vspace{-0.35cm}\\
%
%  {\rm  \fbox{\bf {F}$_0$\ picture}}\\
%
  \hline
 \ea
 \stackrel
 { \stackrel{{\bf  }}{(\Theta_{{\lambda}}=\Omega^\dagger_{{\lambda}}\Omega_{{\lambda}})} }
 { \longrightarrow }
 \begin{array}{|c|}
 \hline
 %\vspace{-0.35cm}\\constructed\
 {\rm  \bf  perturbed\ system,}
 %\Theta_{{\lambda}}=\Omega_{{\lambda}}^\dagger\Omega_{{\lambda}}
 %\ {\rm in }
 \\
 % {\lambda}-{\rm dependent\ }
 %{{\rm   physical\ space }}\
  {\rm  {Hermiticity}}\ {\rm    in}\ \fbox{${\cal H}_{{\lambda}}$}\!:\\
 %\hline
 %{\rm \underline{sought}\ via\  eq.}\
 %
% \Theta_{{\lambda}}H_{{\lambda}}=
% H^\dagger_{{\lambda}}\Theta_{{\lambda}}
    H_{{\lambda}}=\Theta^{-1}_{{\lambda}}
 H^\dagger_{{\lambda}}\,
 \Theta_{{\lambda}} =H^\ddagger_{{\lambda}}\,\\
 \hline
 \ea
 \,.
 \\
 % \ \ \ \ \ \ \ \
% \ \  \  \ \ \ \ \ \
% \ \ \ \ \ \ \ \
% \ \ \ \ \ \ \ \
%  \ \  \  \ \ \ \ \ \
%  \ \  \  \ \ \ \ \ \
% \ \ \ \ \  %\searrow\!\!\!\nwarrow\
%% %\stackrel{{{\lambda}}\neq 0\ {\bf  physics}}{}\\
% \stackrel{{\bf (of\ ulimtate\ interest)\,.}}{}\\
%\\
\\
 \ea
 \\
 % \vspace{0.171cm}
 \\
 \label{findiag}
 \ee
Obviously, an immediate consequence of
our decision of the identification of the two textbook
Hilbert spaces
${\cal L}$ and ${\cal L}_\lambda$
and of starting from the solvable unperturbed Hamiltonian $H$
is that in applications,
we have to have access to
the dynamical input information
in its ``inaccessible'', ${\cal L}-$space-based form
(\ref{conve}) of the
lower-case operator $\mathfrak{v}$.
In applications we would encounter the necesssity of
redirecting the 5HS flowchart of Eq.~(\ref{findiag}).

\section{Inverted 5HS flowchart and its simplifications\label{ctyri}}

In diagram (\ref{findiag}) it is natural to require
the full knowledge of the
unperturbed system, i.e., of the upper case Hamiltonian $H$ acting
in ${\cal K}$,
of its metric $\Theta$ and factors $\Omega$,
and of the lower-case Hamiltonian $\mathfrak{h}$
acting in ${\cal L}$.
As already indicated, the rest of the flowchart may be then read as
an analogue of the ``direct'' pattern of Eq.~(\ref{trihs}).
Unfortunately, this only allows one to start from a
lower-case perturbation
term ${{\lambda}}\,\mathfrak{v}_{{\lambda}}$ in ${\cal L}$
which is, due to the overall
Dyson-inspired philosophy, prohibitively complicated.
The simplifications, it any, can only be based on
an intuitive insight helping us to pick up an
appropriate
{\em ad hoc\,} Dyson map $\Omega_\lambda$ leading to
the
expected thorough simplification of the new Hamiltonian
$H_\lambda$.

In general, one can hardly rely upon the similar intuitive guidance,
and one has to start perturbation expansions
directly from the hypothetical dynamical input knowledge of
the upper-case perturbation.
In other words, at least some of the arrows in our 5HS flowchart
(\ref{findiag}) have to be inverted.

\subsection{Formal elimination of ${\cal K}_\lambda$\label{tyri}}

In the 5HS recipe the first, most visible challenge
lies in the manifest $\lambda-$dependence of the working
Hilbert space ${\cal K}_\lambda$. A remedy consists in
a transition from the $\lambda-$dependent space
${\cal K}_\lambda$ to its $\lambda-$independent version ${\cal K}$.
The purpose can be served by an abstract operator ${
J}_{{\lambda}}$ which would link the two spaces,
 \be
  |\psi\kt_{{\lambda}} \in {\cal K}_{{\lambda}}\,,
 \ \ \ \
 |\psi\kt={ J}_{{\lambda}}|\psi\kt_{{\lambda}} \in {\cal K}\,.
 \ee
The perturbed Hamiltonian is defined in ${\cal K}_{{\lambda}}$ and
reads
 \be
 H_{{\lambda}}=
 \Omega_{{\lambda}}^{-1}\mathfrak{h}_{{\lambda}}\Omega_{{\lambda}}
 %=\Theta_{{\lambda}}^{-1} H_{{\lambda}}^\dagger\Theta_{{\lambda}}
 \,.
 \label{23}
 \ee
Only now we should formally introduce
a ``tractable'' upper-case operator $W$ representing the
input dynamical information about the perturbation,
 \be
  \left ({ J}^{\dagger}_{{\lambda}}\right )^{-1}
 H_{{\lambda}}\,{ J}_{{\lambda}}^{-1}
 =
 H+\lambda\,W_\lambda\,.
 \label{21}
 \ee
Such a form of perturbation is defined in ${\cal K}$. It should be
treated as our initial information on dynamics. The 5HS flowchart
(\ref{findiag}) must be, in this sense, not only inverted
but also re-read -- as a
perturbation-theory analogue of the 3HS reconstruction recipe
prescribed by diagram~(\ref{humandiag}).

Fully in the spirit of the specific Bender's interpretation of
the theory \cite{Carl}, also
one of our present most fundamental methodical requirements is that the
quantum system in question is not an open system and that its
evolution is unitary. In ${\cal L}$ the perturbation term must be
self-adjoint, $\mathfrak{v}_\lambda= \mathfrak{v}^\dagger_\lambda$.
Otherwise, we could not accept any given candidate for the
Hamiltonian, irrespectively of its particular representation, as an
operator of an experimentally realizable observable. The emergence
of {\em any\,} non-Hermiticity in $\mathfrak{v}_\lambda$ would
indicate that our system ceased to be isolated from its environment.
Being exposed to an external force, weak or strong, all of its
measurable features could suffer a decay or growth.

Naturally, the unitarity restriction and the Hermiticity assumption
are widely accepted when one works in the conventional framework of
textbook space ${\cal L}$. These requirements must be translated
into the language of the sophisticated, $\lambda-$dependent physical
Hilbert space ${\cal H}_\lambda$. This being said, the calculations
have to be performed in, and only in,
Hilbert space  ${\cal K}$. Only the standard probabilistic
interpretation of the results must be pulled down and
performed, via metric
$\Theta_\lambda$, in ${\cal H}_\lambda$.

%\subsection{Further reductions\label{res5HS}}

\subsection{The effective physical metric $\,T_\lambda\,$ in  $\,{\cal K}-$space}

%to $\,{\cal M}_\lambda$}

%from $\,{\cal H}_\lambda\ $ to $\,{\cal K}$
%: $\Theta_\lambda\ \to \ T_\lambda$
%and ${\cal H}_\lambda \ \to \ {\cal M}_\lambda$

The 3HS theories may be characterized, for unperturbed as well as
perturbed systems, by the separation of descriptions of mathematics
(in ${\cal K}_{{\lambda}}$) and of physics (in ${\cal
H}_{{\lambda}}$). Such a split of the two traditional roles of the
single textbook Hilbert space ${\cal L}$ may simplify the picture. A
deeper analysis of the change may be based on the factorization of
the perturbed Dyson's map into its unperturbed part and a correction
which is, presumably, small,
 \be
 \Omega_{{\lambda}}=\Omega\,(1+{{\lambda}}\,\Delta_{{\lambda}}){
 J}_{{\lambda}}\,.
 %=(1+{{\lambda}}\,\delta_{{\lambda}})\,\Omega\,
 % { J}_{{\lambda}}\,
 \label{25a}
 \ee
In the next step we find a map between ${\cal K}_{{\lambda}}$ and
${\cal K}$ and, subsequently, we factorize the metric accordingly,
 \be
 \Theta_{{\lambda}}
 ={ J}^\dagger_{{\lambda}}\,T_{{\lambda}}\,{ J}_{{\lambda}}\,,
 \ \ \ \ \
 T_{{\lambda}}=(1+{{\lambda}}\,\Delta^\dagger_{{\lambda}})\,\Theta\,
 (1+{{\lambda}}\,\Delta_{{\lambda}})\,.
 \label{26}
 \ee
This enables us to parallel the reconstruction step (B) of the
general 3HS recipe (\ref{humandiag}). As long as we have
$T_{{\lambda}} \neq \Theta_{{\lambda}}$ and ${\cal K}_\lambda \neq
{\cal K}$, and as long as the metric depends on ${{\lambda}} \neq
0$, the reconstruction does not yield the original physical Hilbert
space ${\cal H}={\cal H}_{{\lambda}}$ of diagram (\ref{findiag}) but
rather another physical Hilbert space. This space would be
an additional component of the theory. It deserves to be denoted by
a separate dedicated symbol, say, ${\cal M}_{{\lambda}}$. Thus, we
arrive at the following operational interpretation of the scheme.
\begin{lemma}$\!\!\!.$
 \label{te27}
After a transfer of the theory from ${\cal K}_{{\lambda}}$ to ${\cal K}$
via Eqs.~(\ref{23}) and (\ref{21}),
the crypto-Hermiticity constraint
(\ref{cryher}) imposed upon the Hamiltonian may be given the form
 \be
 (H^\dagger + {{\lambda}}\,W^\dagger)\,T_{{\lambda}}
 =T_{{\lambda}}\,(H + {{\lambda}}\,W)\,.
 \label{27}
 \ee
%of the equation to be solved for the reduced metric of Eq.~(\ref{26}).
\end{lemma}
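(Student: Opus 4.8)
The plan is to read the target identity (\ref{27}) for what it actually is: the crypto-Hermiticity constraint, in the exact form of (\ref{cryher})--(\ref{dudu}), but now written for the transferred input operator $H+\lambda W$ of Eq.~(\ref{21}) with respect to the effective metric $T_\lambda$ and in the single fixed space ${\cal K}$. Accordingly I would not attempt any fresh computation of (\ref{27}); instead I would derive it by transporting the \emph{already guaranteed} quasi-Hermiticity of the perturbed Hamiltonian in its own space ${\cal K}_\lambda$ through the two structural relations (\ref{21}) and (\ref{26}). The starting point is therefore the perturbed analogue of (\ref{dudu}), namely $H_\lambda^\dagger \Theta_\lambda = \Theta_\lambda H_\lambda$, which holds because $H_\lambda$ is crypto-Hermitian in ${\cal K}_\lambda$ by construction of the flowchart (\ref{findiag}).

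First I would rewrite $H_\lambda$ and $\Theta_\lambda$ in terms of the objects living in ${\cal K}$. Relation (\ref{21}) gives $H_\lambda = J_\lambda^\dagger (H+\lambda W) J_\lambda$ and hence $H_\lambda^\dagger = J_\lambda^\dagger (H^\dagger+\lambda W^\dagger) J_\lambda$, while the factorization (\ref{26}) gives $\Theta_\lambda = J_\lambda^\dagger T_\lambda J_\lambda$. Substituting these into $H_\lambda^\dagger \Theta_\lambda = \Theta_\lambda H_\lambda$ exposes, on each side, a common outer left factor $J_\lambda^\dagger$ and a common outer right factor $J_\lambda$, both invertible and hence removable, leaving
\be
 (H^\dagger + \lambda W^\dagger)\,(J_\lambda J_\lambda^\dagger)\,T_\lambda
 = T_\lambda\,(J_\lambda J_\lambda^\dagger)\,(H+\lambda W)\,.
\ee
Thus the entire statement collapses to (\ref{27}) the moment the inner sandwiched factor $J_\lambda J_\lambda^\dagger$ reduces to the identity.

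The main (and essentially the only) obstacle is therefore to justify $J_\lambda J_\lambda^\dagger = I$, i.e. the unitarity of the linking map $J_\lambda:{\cal K}_\lambda \to {\cal K}$ introduced around Eq.~(\ref{23}). I would argue that this is not an extra ad hoc demand but the very meaning of that transition: both ${\cal K}_\lambda$ and ${\cal K}$ are the ``friendly'' computational spaces equipped with the same trivial inner product, and $J_\lambda$ merely relabels the former onto the latter; a relabeling meant to preserve the inner product (and hence the meaning of the dagger entering (\ref{21}) and (\ref{26})) is by definition an isometry, i.e. unitary. It is worth recording explicitly why this cannot be dispensed with, since it is the one place a careless reading could fail: for a merely invertible $J_\lambda$ one would retain a weight $P=J_\lambda J_\lambda^\dagger \neq I$ trapped between Hamiltonian and metric, producing $(H^\dagger+\lambda W^\dagger)\,P\,T_\lambda = T_\lambda\,P\,(H+\lambda W)$ instead of (\ref{27}). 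With $J_\lambda J_\lambda^\dagger = I$ in hand, the middle factor disappears, the outer $J_\lambda^\dagger$ and $J_\lambda$ are stripped, and one obtains (\ref{27}) verbatim, which also identifies $T_\lambda$ of (\ref{26}) as precisely the effective ${\cal K}$-space metric with respect to which $H+\lambda W$ is crypto-Hermitian.
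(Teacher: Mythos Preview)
Your argument is correct and is precisely the route the paper has in mind; the paper, however, states the lemma without writing out a proof, treating Eq.~(\ref{27}) as an immediate consequence of the transfer formulae (\ref{21}) and (\ref{26}) once $T_\lambda$ has been introduced. The one point you isolate—the need for $J_\lambda J_\lambda^\dagger=I$—is indeed left implicit in the text, and your justification (that $J_\lambda$ merely relabels one auxiliary ``friendly'' space onto another carrying the same trivial inner product, hence is an isometry) is exactly the intended reading; note also that without it the defining relation (\ref{21}) would not even be a similarity transform, so $H+\lambda W$ would fail to be isospectral with $H_\lambda$, contrary to the whole spirit of the construction. In short, your proposal is both correct and, if anything, more explicit than the source.
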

Given the perturbation ${{\lambda}}\,W$, this equation specifies the
(empty or non-empty) class of admissible (and, in general,
non-unique) inner-product metrics $T_{{\lambda}}$ converting our
working Hilbert space ${\cal K}$ into an innovated
{physical} Hilbert space ${\cal M}_{{\lambda}}$.
Diagrammatically, we have
  \be
  \vspace{0.1cm}
  \ba
  \begin{array}{|c|}
 \hline
 %\vspace{-0.35cm}\\
 {\rm   \bf perturbation\ \rm  prescribed, }\\
 %\hline
 {\rm by\ Hamiltonian}\
 H+{{\lambda}}\,W_\lambda,\
  {\rm in}
 %\Omega^{-1}  \mathfrak{h}\Omega
  \\
  %{\rm in}\
 \fbox{{\rm  $\lambda\!-\!$independent \underline{user-friendly} space ${\cal K}$}}\\
% \vspace{-0.35cm}\\
%
%  {\rm  \fbox{\bf {F}$_0$\ picture}}\\
%
  \hline
 \ea
 \stackrel{ \longrightarrow }
 { \stackrel{{\bf (hermitization)}}{} }
 \begin{array}{|c|}
 \hline
 %\vspace{-0.35cm}\\
 { \bf reduced\   metric \   }  T_{{\lambda}}  \\
 %\hline
 %{\rm metric\ }
 {\rm   converts\ }{\cal K}
 {\ \rm into\ another, }
 %\
 %{\rm new, }
 % H_{{\lambda}}=\Theta^{-1}_{{\lambda}}
% H^\dagger_{{\lambda}}\Theta_{{\lambda}}
 \\
 %{\rm reduced\ }
 \fbox{{\rm reduced \underline{physical} space ${\cal M}_{{\lambda}}$}}\\
 \hline
 \ea
 \\
 % {\rm  \fbox{\bf {S}$_0$\ picture}}\\
 %
 \ \
  \ \  \  \ \ \ \ \ \
 \stackrel{{\bf (reduced\ map)}}{}
 \ \
  \searrow\ \  \  \ \ \ \ \ \
 \ \ \ \ \ \ \ \
  \ \  \  \ \ \ \ \ \
  \ \  \  \ \ \ \ \ \
 \ \ \ \ \  \swarrow\!\!\!\nearrow\
 \stackrel{{\bf (the\ same\ {{\lambda}}\neq 0\ physics)}}{}
 \\
    \begin{array}{|c|}
 \hline
 %\vspace{-0.3cm}\\
 {\rm  probabilistic\  \bf interpretation\ \rm reconstructed,}\\
 via\ {\rm    {isospectral}}\
 \mathfrak{h}_{{\lambda}}
 %=\Omega_{{\lambda}} H_{{\lambda}}\Omega^{-1}_{{\lambda}}
 %=\mathfrak{h}_{{\lambda}}^\dagger
 =\mathfrak{h}_0+{{\lambda}} \mathfrak{v}_\lambda
  %=\Omega H\Omega^{-1}
  =\mathfrak{h}_{{\lambda}}^\dagger,
  \
  {\rm in}\\
  \fbox{{\rm $\lambda\!-\!$independent \underline{textbook} space ${\cal L}$}}\,.\\
  %\hline
 % {\rm \bf and}\\
% {\rm  with\  unperturbed  }\
% \mathfrak{h}_0
% =\Omega_0 H_0\Omega^{-1}_0=\mathfrak{h}_0^\dagger
%  \\
 % \ {\rm  perturbation\ recipes\    not\ feasible\ } \ \\
 \hline
 \ea
 \\
 \ea
  \vspace{0.1cm}
 \label{dvacetosm}
 \ee
%====
This represents the ultimate compact pattern suitable for the
treatment of perturbations in quasi-Hermitian or ${\cal
PT}-$symmetric quantum theories. We have to conclude that in these
upgrades of the conventional formulation of quantum mechanics in
Schr\"{o}dinger picture the answer to the key question ``Which
perturbations are small?'' is much less straightforward than one
would {\it a priori\,} expect.

\section{Perturbation series in
crypto-Hermitian physics\label{octyri}}

In our preceding analysis we demonstrated that in the theories using
the non-selfadjoint representations of observables a proper
treatment of perturbations is far from trivial. We managed to
simplify the general mathematical pattern but its implementations
still pose a few open problems.

\subsection{Upper-case
perturbations $\lambda\,W(\lambda)$ in  $\,{\cal K}-$space\label{poctyri}}

One of the important tacit assumptions lying behind our preceding
considerations is that the unperturbed Hamiltonian $H$ is
well-behaved and diagonalizable. Naturally, in an overall context of
${\cal PT}-$symmetric quantum mechanics such a property of $H$ is
fragile because in general, such an operator may still exhibit a
high degree of non-Hermiticity in ${\cal K}$. In such a case a word
of warning was issued in our recent paper \cite{corridors}. We
studied there a specific family of toy models in which the
Hamiltonians of interest were ``almost equal'' to a
non-diagonalizable Jordan-block $N$ by $N$ matrix. In the paper the
model has been found ``almost non-perturbative'' simply because its
Hamiltonians $H$ were all lying very close to the Kato's \cite{Kato}
exceptional-point dynamical extreme. For this reason
even the ``unperturbed'' metric $\Theta$ happened to be (almost)
singular. This means that the model itself did not certainly satisfy
the mathematical assumptions of smoothness and applicability of
flowcharts (\ref{findiag}) or (\ref{dvacetosm}).

Another, more physics-oriented word of warning was formulated in our
brief comment \cite{Ruzicka}. We emphasized there that one of the
key distinguishing features of the {\em closed\,} quasi-Hermitian
quantum systems (i.e., e.g., of those exhibiting the spontaneously
unbroken ${\cal PT}-$symmetry \cite{Carl}) is that the corresponding
{\em real\,} manifolds of parameters may contain certain
submanifolds (mostly, the hierarchies of hypersurfaces) which are
formed by the associated Kato's exceptional points. Naturally, these
hypersurfaces determine the phase-transition boundaries of the
(often, compact \cite{maximal}) parametric domains of stability
${\cal D}$ of the quantum system in question. Again, we have to
emphasize that in any prospective analysis of stability, the
applicability of our present constructive flowcharts only remains
restricted to the ``deep interiors'' of ${\cal D}$ admitting the
standard unitary-evolution interpretation
of the closed quantum system in
question.

Under these assumptions we may expect that the pull-down process
${\cal K}_\lambda \to {\cal K}$ would not lead to any conceptual
problems. We may merely require the knowledge of an appropriate
input form (\ref{21}) of the Hamiltonian. In the context of physics
we can also make use of the non-singular one-to-one correspondence
between the textbook space ${\cal L}$ and its unitarily
non-equivalent alternative ${\cal K}_\lambda$. The
existence of the invertible $\lambda-$dependent
Dyson map (\ref{25a}) is then found equivalent to the
stability of the quantum system in question.

\begin{lemma}$\!\!\!.$
Whenever the spectrum of the
perturbed
Hamiltonian $H+{{\lambda}}\,W_\lambda$ (which is defined and
non-Hermitian in the reduced, $\lambda-$independent Hilbert space ${\cal K}$)
is not real,
the positive definite solution $T_\lambda=T^\dagger_\lambda$ of
Eq.~(\ref{25a}) does not exist.
\label{ledva}
\end{lemma}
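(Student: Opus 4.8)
The plan is to establish the contrapositive: I will show that if a positive-definite Hermitian metric $T_\lambda$ obeying the quasi-Hermiticity constraint (\ref{27}) exists, then the spectrum of $A:=H+\lambda W$ is necessarily real. (The lemma cites ``Eq.~(\ref{25a})'' but the defining relation for the metric $T_\lambda$ is clearly (\ref{27}), which I write compactly as $A^\dagger T_\lambda = T_\lambda A$.) The whole argument rests on the elementary observation, already implicit in the ``direct'' flowchart (\ref{trihs}), that a positive-definite metric factorizes and thereby exhibits the non-Hermitian $A$ as similar to a genuinely self-adjoint operator, whose spectrum is automatically real.

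First I would invoke positivity. Since $T_\lambda=T_\lambda^\dagger>0$, it admits a unique positive-definite Hermitian square root, which I denote $\Omega:=T_\lambda^{1/2}$, so that $T_\lambda=\Omega^\dagger\Omega=\Omega^2$ with $\Omega=\Omega^\dagger$ invertible, exactly in the pattern of (\ref{hait}). Substituting $T_\lambda=\Omega^2$ into $A^\dagger T_\lambda = T_\lambda A$ and rearranging yields $A^\dagger=\Omega^2 A\,\Omega^{-2}$. I then define the rehermitized operator $\mathfrak{h}:=\Omega A\,\Omega^{-1}$ and verify its self-adjointness by a one-line computation,
\[
\mathfrak{h}^\dagger=\Omega^{-1}A^\dagger\Omega=\Omega^{-1}\left(\Omega^2 A\,\Omega^{-2}\right)\Omega=\Omega A\,\Omega^{-1}=\mathfrak{h}.
\]
Hence $\mathfrak{h}=\mathfrak{h}^\dagger$ has real spectrum. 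Because $\Omega$ is invertible, the similarity transformation $A\mapsto\Omega A\,\Omega^{-1}$ preserves the spectrum, so $\sigma(A)=\sigma(\mathfrak{h})\subset\R$. This contradicts the hypothesis that the spectrum of $H+\lambda W$ is not real, and the lemma follows.

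The only genuine subtlety, and the step I would flag as the main obstacle beyond the finite-dimensional toy-model setting (\ref{primo}) that motivates the paper, lies in the passage from \emph{similar} to \emph{isospectral} in the unbounded case. There one must secure that both $\Omega=T_\lambda^{1/2}$ and $\Omega^{-1}$ are bounded, i.e.\ that $T_\lambda$ is bounded with bounded inverse ($0<m\,I\le T_\lambda\le M\,I$), and that $\mathfrak{h}$ is essentially self-adjoint on the transported domain, so that spectral invariance under conjugation is exact rather than merely formal. For matrix Hamiltonians of the type considered here these caveats are vacuous and the contrapositive above is complete linear algebra; for the general quasi-Hermitian setting they reproduce precisely the boundedness restrictions of Scholtz et al.\ \cite{Geyer} recalled in section~\ref{prepu}.
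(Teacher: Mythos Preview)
Your proof is correct and follows essentially the same idea as the paper: both arguments amount to the observation that, once $T_\lambda$ is positive definite, the quasi-Hermiticity relation (\ref{27}) is a hidden-Hermiticity condition on $H+\lambda W_\lambda$ and hence forces the spectrum to be real. The paper's proof merely states this in one sentence, whereas you make it explicit by factoring $T_\lambda=\Omega^2$ via the positive square root and exhibiting the self-adjoint partner $\mathfrak{h}=\Omega A\,\Omega^{-1}$; your additional remarks on the unbounded case are a useful refinement that the paper does not supply.
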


\begin{proof}
As long as definition (\ref{26})
implies the positive definiteness of metric $T_{{\lambda}}$
in ${\cal M}_{{\lambda}}$ at any not too large $\lambda$,
relation (\ref{27}) should be re-read
as a hidden Hermiticity requirement,
i.e., as a spectral reality condition imposed upon
our operator $H+{{\lambda}}\,W_\lambda$ in ${\cal K}$.
\end{proof}
The perturbations-incorporating 5HS formalism
offers an expected and consistent picture of correspondence between
the loss of the reality of the spectrum (occurring at the Kato's
exceptional-point limit of $\lambda \to \lambda^{(EP)}$) and the
loss of the unitary equivalence between the respective user-friendly
and user-unfriendly physical Hilbert spaces ${\cal M}_\lambda$ and
${\cal L}$. In and only in this sense we may formally separate the
class of perturbations $\lambda\,W_\lambda$ into its ``sufficiently
small'' and ``inadmissibly large'' subclasses. At the same time, due
to the manifest $\lambda-$dependence of ${\cal M}_\lambda$, the
formulation of some useful criteria of such a split would be much
more difficult than in conventional quantum mechanics (a persuasive
illustrative example may be found, e.g., in \cite{corridors}).

\subsection{Correspondence principle and lower-case
perturbations $\lambda\,\mathfrak{v}(\lambda)$\label{gyri}}

%
%\subsection{``Small'' perturbations $\lambda\,\mathfrak{v}$,
%norm-bounded in ${\cal L}$}

%and hidden smallness ofHiddenly Hermitian perturbed Hamiltonians}

At every not too large value of $\lambda$ the statement of Lemma
\ref{ledva} can be inverted in the sense that the reality of the
non-degenerate perturbed spectrum provides access to the
$\lambda-$dependent physical Hilbert space ${\cal M}_\lambda$. But
this would be an observation of little practical value because we
only know the unperturbed spectrum in advance. This is an obstacle
which could invalidate the routine direct analysis of stability of
given quantum systems with respect to random perturbations
\cite{Ruzicka}. At the same time, in many quantum models, the
``inaccessible'' Hilbert space ${\cal L}$ still keeps carrying the
role of a reliable interpretation background using the so called
principle of correspondence.

The applicability of such a principle to the closed quantum models
defined directly in ${\cal K}$ appears, unfortunately, strongly
limited. What remains at our disposal are only the possibilities of
an indirect treatment of the correspondence and stability issues.
Its realization could be based on a transfer of some time-proven
requirements (e.g., of the norm-boundedness of the lower-case
perturbations $\lambda\,\mathfrak{v}(\lambda)$) from ${\cal L}$ to
${\cal K}$. Naturally, {\em nobody\,} would ever use the complicated
3HS representation of unitary quantum systems if the conventional
textbook space ${\cal L}$ were {\em not\,} practically inaccessible.
Thus, the challenges emerging with the tests of stability appear
directly connected to the inaccessibility of space ${\cal L}$. A
quantification of the smallness of the perturbation does not seem
feasible without the necessary knowledge of the {\em perturbed\,}
version of the metric.

The ultimate conclusion is that we are only allowed to use the
${\cal K}-$space image of the inaccessible ${\cal L}-$space
perturbation,
 \be
 V_{{\lambda}} =
 \Omega^{-1}\mathfrak{v}_{{\lambda}}\, \Omega\,.
 \ee
This enables us to recall the knowledge of the unperturbed matric
$\Theta$ and the physical hidden-Hermiticity constraint
 \be
 V^\dagger_\lambda\,
 \Theta=\Theta\,V_\lambda\,.
 \label{27b}
 \ee
A relationship between operator $V_\lambda$ and the experimental
information carried by the upper-case perturbation $W_\lambda$ of
Eq.~(\ref{27}) is still far from clear. As long as it is mediated by
Eq.~(\ref{23}) it may be given the form
of a map from ${\cal K}$ to ${\cal L}$,
 \be
 (\mathfrak{h}+{{\lambda}} \mathfrak{v}_\lambda)\Omega
 (1+{{\lambda}}\,\Delta_\lambda)
 =\Omega(1+{{\lambda}}\,\Delta_\lambda)(H+{{\lambda}} W_\lambda)\,.
 \label{43}
 \ee
After its minor
re-arrangement we obtain the full-fledged upper-case space$-{\cal
K}$ version of the correspondence between $W_\lambda$ and
$V_\lambda$,
 \be
 (H+{{\lambda}}\, V_\lambda) (1+{{\lambda}}\,\Delta_\lambda)
 =(1+{{\lambda}}\,\Delta_\lambda)(H+{{\lambda}}\, W_\lambda)\,.
 \label{44}
 \ee
Such a relation confirms that $V_\lambda \neq W_\lambda$. A
simplification of this formula can yield the explicit definition of
the given, dynamical-input operator $W_\lambda$ in terms of
$V_\lambda$ (with the norm-boundedness still under our control,
in principle at least).
{\it Vice versa}, the reconstruction of $V_\lambda$ from the given
input $W_\lambda$ reads
 \be
 V_\lambda=
 (1+{{\lambda}}\,\Delta_\lambda)\,W_\lambda\,
 (1+{{\lambda}}\,\Delta_\lambda)^{-1}+
 (\Delta_\lambda\,H-H\,\Delta_\lambda)\,
 (1+{{\lambda}}\,\Delta_\lambda)^{-1}\,.
 \label{explici}
 \ee
This was the last step towards the following
important conclusion
concerning the operational admissibility of the Hamiltonian.

\begin{lemma}$\!\!\!.$
The insertion of operator $V_\lambda=V_\lambda(W_\lambda)$ of
Eq.~(\ref{explici}) in Eq.~(\ref{27b}) yields the operator-equation criterion
which tests the hidden Hermiticity
of the
perturbed Hamiltonian at any fixed
component $\Delta_\lambda$ of the physical Hilbert-space metric
$\Theta_\lambda$.
\label{unote}
\end{lemma}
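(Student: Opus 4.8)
The plan is to treat this statement as the verification that a single substitution produces the advertised criterion, so the proof is essentially constructive algebra rather than a deep argument. First I would observe that Eq.~(\ref{27b}), namely $V^\dagger_\lambda\,\Theta=\Theta\,V_\lambda$, is nothing but the ${\cal K}$-space transcription of the self-adjointness $\mathfrak{v}_\lambda=\mathfrak{v}_\lambda^\dagger$ of the lower-case perturbation in ${\cal L}$, since $V_\lambda=\Omega^{-1}\mathfrak{v}_\lambda\,\Omega$ and $\Theta=\Omega^\dagger\Omega$. Combined with the unperturbed quasi-Hermiticity (\ref{dudu}), $H^\dagger\Theta=\Theta H$, this shows that (\ref{27b}) is equivalent to the full quasi-Hermiticity of the perturbed operator $H+\lambda\,V_\lambda$ with respect to the \emph{unperturbed} metric $\Theta$,
\be
(H+\lambda\,V_\lambda)^\dagger\,\Theta=\Theta\,(H+\lambda\,V_\lambda)\,.
\ee

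The central step is then purely algebraic. From the correspondence (\ref{44}), or directly from the explicit reconstruction (\ref{explici}), I would read off that $H+\lambda\,V_\lambda=(1+\lambda\,\Delta_\lambda)(H+\lambda\,W_\lambda)(1+\lambda\,\Delta_\lambda)^{-1}$, i.e.\ that the two upper-case Hamiltonians are related by the similarity transformation $1+\lambda\,\Delta_\lambda$, invertible for all sufficiently small $\lambda$. Substituting this into the displayed quasi-Hermiticity relation, multiplying on the left by $(1+\lambda\,\Delta_\lambda)^\dagger$ and on the right by $(1+\lambda\,\Delta_\lambda)$, and recognising the combination $(1+\lambda\,\Delta_\lambda)^\dagger\,\Theta\,(1+\lambda\,\Delta_\lambda)=T_\lambda$ from the factorisation (\ref{26}), I expect the relation to collapse exactly to
\be
(H^\dagger+\lambda\,W_\lambda^\dagger)\,T_\lambda=T_\lambda\,(H+\lambda\,W_\lambda)\,,
\ee
which is the crypto-Hermiticity condition (\ref{27}) of Lemma~\ref{te27}. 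This identifies the insertion of (\ref{explici}) into (\ref{27b}) with a single operator equation in the known data $H$, $W_\lambda$ and $\Theta$ and in the free metric component $\Delta_\lambda$, and that equation is precisely the hidden-Hermiticity test for $H+\lambda\,W_\lambda$ at the chosen $\Delta_\lambda$.

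The main obstacle I anticipate is bookkeeping rather than conceptual: keeping track of the non-commuting factors $1+\lambda\,\Delta_\lambda$ and $\Delta_\lambda^\dagger$ through the two-sided conjugation, and checking that the commutator contribution $(\Delta_\lambda H-H\Delta_\lambda)(1+\lambda\,\Delta_\lambda)^{-1}$ in (\ref{explici}) recombines with the transported $W_\lambda$ exactly as the similarity picture predicts. A secondary point to state carefully is the logical status of the conclusion: the lemma asserts only that the substitution \emph{yields a criterion}, i.e.\ the operator equation itself, so I would be careful \emph{not} to smuggle in positive-definiteness of $T_\lambda$. That stronger demand is what separates a mere spectral-reality test from genuine unitarity, and it is the content of Lemma~\ref{ledva}, not of the present statement.
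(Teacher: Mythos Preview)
Your proposal is correct and, in fact, more explicit than what the paper provides. In the paper the lemma is stated without a formal proof: it appears as the direct ``conclusion'' of the derivation culminating in Eq.~(\ref{explici}), with the preceding sentence simply announcing that ``this was the last step towards the following important conclusion''. The paper therefore treats the lemma as an immediate observation once Eqs.~(\ref{27b}) and (\ref{explici}) are both on the table.

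Your route---reading (\ref{44}) as the similarity $H+\lambda V_\lambda=(1+\lambda\Delta_\lambda)(H+\lambda W_\lambda)(1+\lambda\Delta_\lambda)^{-1}$, inserting it into the $\Theta$-quasi-Hermiticity of $H+\lambda V_\lambda$, and then recognising $(1+\lambda\Delta_\lambda)^\dagger\Theta(1+\lambda\Delta_\lambda)=T_\lambda$ from (\ref{26})---is exactly the mechanism the paper has already set up in Lemma~\ref{te27}, so you are not introducing anything foreign. What you gain over the paper's presentation is the explicit closure of the loop: you show that the criterion produced by the substitution is \emph{equivalent} to the crypto-Hermiticity condition (\ref{27}), rather than merely asserting that a criterion results. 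Your closing caveat about not importing positive-definiteness of $T_\lambda$ (reserved for Lemma~\ref{ledva}) is also well placed and matches the division of labour in the paper.
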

In the scenario with
constant input $W \neq W({{\lambda}})$, formula (\ref{explici})
implies a manifest ${{\lambda}}-$dependence of output
$V=V({{\lambda}})$ (or of $\mathfrak{v}=\mathfrak{v}({{\lambda}})$).
The same observation would also hold in the opposite direction,
moving from $\mathfrak{v}\neq \mathfrak{v}({{\lambda}})$ and $V \neq
V({{\lambda}})$ to $W = W({{\lambda}})$. In both of these scenarios
one may conclude that
 \be
 V_\lambda-W_\lambda=\Delta_0\,H-H\,\Delta_0+{\cal O}({{\lambda}})\,.
 \label{keyeq}
 \ee
This means that the difference between $V_\lambda$ and $W_\lambda$
(caused by the Hamiltonian-dependence of the metric) is, in
general, model-dependent and non-perturbative, i.e., not necessarily
small at small ${{\lambda}}$. In fact, such an observation
exemplifies one of the differences between the present approach to
(general, non-Hermitian) perturbations and the traditional textbook
forms of perturbation theory.

\subsection{Perturbation expansions of metric\label{uuu}}

The feasibility of the constructions at
small $\lambda$ is based on the analyticity
assumptions and Taylor series
 \be
  W_{{\lambda}}=W_0
 +{{\lambda}}\,W^{(1)}
 +{{\lambda}}^2\,W^{(2)}+\ldots\,,
 \label{w25ab}
 \ee
 \be
  \Delta_{{\lambda}}=\Delta_0
 +{{\lambda}}\,\Delta^{(1)}
 +{{\lambda}}^2\,\Delta^{(2)}+\ldots\,.
 \label{25ab}
 \ee
Similarly, in Eq.~(\ref{26}) we expand
 \be
 T_{{\lambda}}=[1+{{\lambda}}\,\Delta^\dagger_{{\lambda}}]\,\Theta\,
 [1+{{\lambda}}\,\Delta_{{\lambda}}]
 =\Theta
 +{{\lambda}}\,T^{(1)}
 +{{\lambda}}^2\,T^{(2)}+\ldots
 \,
 \label{26exp}
 \ee
with
 \be
 T^{(1)}=\Delta_0^\dagger\,\Theta+\Theta\,\Delta_0\,,
 \ \ \ \
% $$
%
%
% $$
 T^{(2)}=(\Delta^{(1)})^\dagger\Theta
 +\Delta_0^\dagger\,\Theta\,\Delta_0+\Theta\,\Delta^{(1)}
 \label{hui}
 \ee
etc.
These series may be inserted in the
crypto-Hermiticity relation (\ref{27}) of Lemma \ref{te27}.
In dominant order the latter relation acquires the form
which is satisfied by assumption,
 $$
 H^\dagger \Theta=\Theta\,H\,.
 $$
In the spirit of perturbation theory
the validity of this relation between the two
given unperturbed
operators $H$ and $\Theta$ is granted.
We may turn attention to
the crypto-Hermiticity relation in the next-order approximation,
  $$
 H^\dagger\,T^{(1)}+W_0^\dagger \,\Theta = \Theta \,W_0+T^{(1)}\,H\,.
 $$
It is to be interpreted as an equation by which one converts the
known leading-order dynamical input (i.e., operator $W_0$) into the eligible
output information about the first order correction $T^{(1)}$ to the
(in general, non-unique) physical Hilbert-space metric.

In the next
step one can proceed to the second order relation
 $$
 H^\dagger\,T^{(2)}+W_0^\dagger \,T^{(1)} +(W^{(1)})^\dagger \,\Theta =
 \Theta \,(W^{(1)})
 +T^{(1)}\,W_0
 +T^{(2)}\,H\,.
 $$
It contains the
operator $T^{(1)}$ obtained in the preceding step.
The second-order correction $T^{(2)}$
to the metric is to be deduced.
Along the same lines one can also evaluate the higher-order corrections.

Alternatively, we might recall Eqs.~(\ref{hui}) and redefine our task
as the search for separate components of the Dyson map in
expansion (\ref{25ab}).
In the first-order approximation the
reconstruction of $\Delta_0$ from the leading-order
perturbation component $W_0$
has the form of equation
 \be
 \widetilde{W_0}^\dagger\,\Theta=\Theta\,\widetilde{W_0}\,,
 \ \ \ \ \widetilde{W_0}=W_0+\Delta_0\,H-H\,\Delta_0\,.
 \ee
In it we recognize relation (\ref{27b}). The abbreviation
$\widetilde{W_0}$ just represents perturbation $V_\lambda$ in the
leading-order approximation of Eq.~(\ref{keyeq}). The variable
$\lambda$ is absent, and the interpretation of the equation is now
approximative. Just the leading-order correction $\Delta_0$ is
deduced from the leading-order input $W_0$.

\section{Discussion\label{fifi}}

%\subsection{Alternative formulations of quantum mechanics}

One of the characteristic, well-remembered \cite{Styer} features of
the birth of quantum mechanics may be seen in its more or less
parallel emergence in the two different (viz, in the Heisenberg's
\cite{Heisenberg} and in the Schr\"{o}dinger's \cite{Schroedinger})
alternative (albeit equivalent)
formulations. Quickly, these two phenomenologically complementary
formulations {\it alias\,} pictures
found
their specific areas of applicability \cite{Messiah}.
Step by step, they were also followed by a
number of further, still fully equivalent reformulations of the
theory, with remarkably different fields of
optimal implementation. {\it
Pars pro toto\,} let us mention the Feynman's
treatment of quantum
theory using path integrals which
opened new horizons in field theory \cite{Feynman}, or the
Bender's and Boettcher's
very recent recommendation
of a turn of attention to
complex interaction
potentials yielding still
the real, stable bound-state spectra \cite{BB}.

The Bender's and Boettcher's proposal
inspired an increase of
experimental as well as theoretical
activities in several different
(i.e., not necessarily just quantum)
branches of
physics \cite{book,Carlbook,Musslimani}.
In such a deeply innovative,
fairly broad and flexible
methodical framework
we tried to
fill here one of the gaps
in the theory by analyzing
a slightly enigmatic problem of the response of a generic
non-Hermitian system to perturbations.
Let us now complement our
mathematical results by a few
physics-oriented remarks.

\subsection{Physics of open {\it versus} closed quantum systems\label{opclo}}

During the study of
quantum
dynamics
controlled
by a
parameter-dependent Hamiltonian $H_\lambda$
which happens to be non-Hermitian in a pre-selected
and, in general,
Hamiltonian-dependent Hilbert space ${\cal K}_\lambda$,
one has to distinguish
between the open-system scenarios
(reflecting, e.g., the
decay of resonances)
and the description of evolution of a closed system
(characterized, first of all, by the unitarity, i.e., by
the conservation of the norm).
In
our present paper we paid attention to
the latter dynamical regime.
We were interested
in a mathematically
consistent description of
quantum systems and of
the changes of their observable properties (e.g., stability)
under perturbations.

In a broader context of applied physics
the emergence of the concept of non-Hermitian Hamiltonians can,
and should, be
traced back to two independent sources. The older one was
offered by Feshbach \cite{Feshbach}. It consists in a
reduction of the conventional Hilbert space of states ${\cal L}$ to
a ``model'' subspace ${\cal K}$.
The trick helped to simplify the language of the theory in
applications.
In our present paper only a marginal attention was paid to
such a traditional non-Hermitian open-system theory. Interested
readers may consult, e.g., monograph \cite{Nimrod} for its recent
review.

The basic idea of the study of
the other, so called closed quantum systems via non-Hermitian quantum
Hamiltonians
is
usually attributed to Dyson \cite{Dyson}.
In practice, the Feshbach- and Dyson-related model-building
strategies based on non-Hermitian Hamiltonians appeared applicable
in different situations.
In contrast to the latter, 3HS strategy,
its former, Feshbach-inspired counterpart deals with
unstable systems and with their resonant states. Their evolution is,
typically, controlled by a non-Hermitian effective Hamiltonian which
may be even energy-dependent \cite{enedep}. One might
refer to the open-system model-subspace theory (in which the
spectrum appears to be complex in general) under the acronym of
``two-Hilbert-space (2HS) formulation of quantum mechanics''.

Occasionally, the  2HS -- 3HS parallels may
be of interest. For example, space ${\cal K}$ must
remain, from the user's perspective, preferable to the
textbook space ${\cal L}$. In contrast to the 2HS
case, the auxiliary 3HS space ${\cal K}$
must always be perceived as unphysical.
An inner-product operator $\Theta$ of metric must be introduced
to reconvert the unphysical space ${\cal K}$ into its correct
alternative ${\cal H}$.

\subsection{The problem of stability\label{sxxx}}

A not quite expected parallel emerges between the 1HS
and 3HS criteria
of admissibility and smallness of perturbations.
In the conventional quantum theory
people usually expect that the stability of a system is lost
whenever the
perturbation $\lambda\,\mathfrak{v}(\lambda)$ ceases to be
self-adjoint. Similarly, our present analysis came to the formally
more complicated but analogous conclusion that
whenever we need to guarantee a closed-system stability
(i.e., the reality of spectrum),
the
hidden Hermiticity constraint has to be imposed
upon the perturbations
$\lambda\,W(\lambda)$.

For a deeper understanding of the latter parallel it is necessary to
recall that in the conventional Schr\"{o}dinger representation
the unitarity of
evolution is guaranteed by the self-adjointness of the Hamiltonian.
The
Dyson-inspired simplification of the Hamiltonian $\mathfrak{h} \to
H=\Omega^{-1}\,\mathfrak{h}\,\Omega$ has been found useful even when
accompanied by the loss of Hermiticity.
Such an isospectral preconditioning is to be read as
changing the Hilbert space into an auxiliary,
unphysical one, ${\cal L} \to {\cal K}$. A subsequent
re-Hermitization of $H$, i.e., the introduction of a suitable
inner-product metric operator $\Theta=\Theta(H)$ then reconverts
${\cal K}$ into the third Hilbert space ${\cal H}$ which is physical
again, i.e., unitarily equivalent to ${\cal L}$.
What is crucial is
that any perturbation  $H \to H_\lambda=H+\lambda\,W(\lambda)$ of
the Hamiltonian in ${\cal K}$ necessarily induces the changes of
geometry in the relevant physical Hilbert space, ${\cal H} \to {\cal
H}_\lambda$. One must suspect
that after the inclusion of perturbations the formalism might lose
its internal mathematical consistency as well as any acceptable
physical interpretation.

In our paper the mathematical
explanation of such an
apparent paradox was described.
Nevertheless,
any exhaustive discussion of
the physical aspects of the 5HS approach to the questions of stability
must necessarily be model-dependent. Certainly,
any guarantee of the
smallness of perturbations
must inseparably be related to the
phenomenology-determining connection between the metric
and a {\em complete\,}
set $\{\Lambda_j\}$ of the
non-Hermitian operators of observables
which is chosen as ``irreducible
in ${\cal K}$'' \cite{arabky}.

A more
detailed analysis of this topic may be found
in the physics-oriented
review \cite{Geyer}. The ambition of the authors
was to establish ``a general
criterion for a set of non-Hermitian operators to constitute a
consistent quantum mechanical system''.
One of the most successful
implementations of this criterion
was discovered and made popular by Bender with coauthors
who studied multiple specific ordinary differential
crypto-Hermitian
quantum models
exhibiting additional antilinear
symmetries of $H= -d^2/dx^2+V(x)$ called
${\cal PT}-$symmetry and
${\cal PCT}-$symmetry (see, e.g., \cite{Carl,ali} for details).

\subsection{The requirement of unbroken ${\cal PT}-$symmetry}

The abstract 3HS formulation of quantum mechanics may be traced back
to Dieudonn\'{e} \cite{Dieudonne}. He pointed out, as early as in
1961, that many of the rigorous mathematical
aspects of the 3HS theory are rather discouraging in the general case.
Later, a number
of further critical arguments has been found against the use of
unbounded Hamiltonians $H$ (cf., e.g.,
\cite{Viola}). A partial remedy has only been
found in review \cite{Geyer} dated 1992. There, the authors
circumvented some of the most serious formal drawbacks
of the general 3HS formalism by admitting
{\em only\,} such ``quasi-Hermitian'' observables (including Hamiltonians
$H$) which are
{\em bounded\,} in ${\cal K}$.

A return of physicists to unbounded
non-Hermitian Hamiltonians with real spectra appeared motivated,
a few years later, by
the needs of quantum field theory \cite{Milton}. This inspired an
intensive search for suitable toy models. In particular, for
one-dimensional Schr\"{o}dinger operators
\cite{BB,HO,Geza} a serendipitous benefit appeared to emerge from an
exceptionally straightforward realization of the Hermitization
${\cal K}\to {\cal H}$ \cite{Carl}. The goal has been achieved
thanks to a fortunate combination of the ${\cal PT}-$symmetry
assumption
 \be
 H{\cal PT}  ={\cal PT} H\,
 \label{ptsy}
 \ee
(where ${\cal P}$ is parity and ${\cal T}$ represents the time
reversal) with the postulate of  existence of an {\it ad hoc\,}
operator ${\cal C}$ exhibiting properties of a charge \cite{Carl}.
This facilitated the construction of the metric because the
Dieudonn\'{e}'s relations (\ref{cryher}) appeared satisfied by the
product $\Theta_{\cal PT}={\cal PC}$ \cite{Carl,Carlbook}.
Thus,
in a widely accepted terminology
referring to the original
motivating context of quantum field theory
people usually connect the EP-related
instant of the loss of the reality of the spectrum of $H$
with the loss of the ${\cal PT}-$symmetry of the wave functions,
giving the loss the
name of the ``spontaneous breakdown of ${\cal PT}-$symmetry''
\cite{Carl}.

After 2012, several concrete implementations
of
the model-building strategy
based on the antilinear
symmetries (\ref{ptsy})
revealed that
the rigorous mathematical background of such an approach
may happen to be perceivably more complicated than
originally expected
(cf., e.g., \cite{ATbook,Uwe,Baga,Petr}).

%Rayleigh-Schr\"{o}dinger perturbation series

\subsection{Physical meaning of the reality of the energies}

In the majority of the traditional textbooks on quantum mechanics
\cite{Messiah} the exposition of the theory usually starts from the
description of the closed systems, i.e., from the conventional
bound-state Schr\"{o}dinger equation which is, say,
real-parameter-dependent.
The evolution of these systems in time is unitary and, in the light
of Stone theorem \cite{Stone}, the Hamiltonian itself is
self-adjoint in the corresponding Hilbert space of states ${\cal
L}$.

Whenever people turn attention to the so called open quantum
systems the self-adjointness constraint must be omitted because the
spectrum of the energies need not be real anymore. Still,
along strictly the same lines,
one can also consider a number of more general,
non-Hermitian resonant-state Schr\"{o}dinger equations
 \be
 \widehat{H(\lambda)}\,|\psi(\lambda) \kt =
 \widehat{E(\lambda)}\,|\psi(\lambda)
 \kt\,
 % \ \ \
 \label{papertse}
 \ee
with non-real spectra $\{\widehat{E(\lambda)}\}$.
The
search for the solutions
need not be followed by any reconstruction of the metric
because
the topological vector space ${\cal K}$
of the open-system states $
|\psi(\lambda) \kt\ \in \ {\cal K}$ becomes physical \cite{Muga}.

The formal parallels between the closed and open quantum systems
become even stronger when one is allowed to assume that the
$\lambda-$dependence of the solutions remains weak. Then
one usually treats the Hamiltonian as
composed of a reference operator and a small perturbation,
 \be
 \mathfrak{h}_\lambda=\mathfrak{h}+\lambda\,\mathfrak{v}
 =\mathfrak{h}_\lambda^\dagger\,,
 \ \ \ \ \
 \widehat{H_\lambda}=\widehat{H}+\lambda\,\widehat{V}
  \,\neq \,\widehat{H_\lambda}^\dagger\,.
 \label{urere}
 \ee
%(we dropped here the zero subscripts as redundant).
It is not too difficult to proceed in a consequent
parallel with the self-adjoint case. With obvious modifications: the
conventional unperturbed bases must be replaced by their
biorthonormalized generalizations. As long as $\widehat{H_\lambda} \neq
\widehat{H_\lambda}^\dagger$, one has to consider Schr\"{o}dinger
equations for the respective left and right eigenvectors of the
Hamiltonian,
 \be
 [\widehat{H_0}+\lambda\,\widehat{V}]\,|\psi(\lambda) \kt =
 \widehat{E}(\lambda)\,|\psi(\lambda)
 \kt\,,
 \ \ \ \
 |\psi(\lambda) \kt\ \in \ {{\cal K}_\lambda}\,,
 \label{apertse}
 \ee
 \be
 [\widehat{H_0}^\dagger+\lambda\,\widehat{V}^\dagger]\,|\psi(\lambda) \kkt =
 \widehat{E}^*(\lambda)\,|\psi(\lambda) \kkt\,,
 \ \ \ \
 |\psi(\lambda) \kkt\ \in \ {\cal K}_\lambda\,.
 \label{bepertse}
 \ee
The energies need not be real in general (so that the superscripted
star $^*$ marks here the complex conjugation) but under suitable
mathematical assumptions they can still be sought via the same
Rayleigh-Schr\"{o}dinger ansatz \cite{Kato},
 \be
 \widehat{E}(\lambda)=\widehat{E}(0) + \lambda\,\widehat{E_1}
 +\lambda^2\,\widehat{E_2}+\ldots\,.
 \label{20hat}
 \ee
One of the key distinctive features of the hatted energies
(\ref{20hat}) is that they are, in
general, complex.
For this reason, as we already emphasized, we
were only interested here in the ``unhatted'' Hamiltonians $H_\lambda$
possessing the {\em strictly real\,} spectra and exhibiting the {\em
additional\,} hidden-Hermiticity property (\ref{dudu}).

\subsection{The concept of admissible perturbations\label{usxxx}}

On the basis of physical-reality-reflecting
requirements one always has to specify clearly the class of the
experimentally realizable perturbations, irrespectively of the
formalism. Such a specification seems missing in the literature
on 3HS models,
so in our recent comment \cite{Ruzicka} and
paper \cite{admissible} we
turned attention to
the topic. We pointed out that the variations of the Hamiltonian
cause also the variations of
the geometry of the underlying physical Hilbert space.
We found the consequences ``strongly counterintuitive''.
In the present continuation of these considerations we
offered one of possible
resolutions of the puzzle, reflecting
the well known danger
\cite{stellenbosch}
that the presence of exceptional points
at small $\lambda$
could endanger the convergence
of Rayleigh-Schr\"{o}dinger series.
For this reason
the acceptability of our present approach
is based on the assumption that these singularities
stay sufficiently remote
so that
the consequent perturbation-approximation
strategy remains admissible and mathematically well founded.

Let us point out that the Hilbert-space geometry
(controlled by the physical inner-product metric $\Theta$) can
acquire its $\lambda-$dependence in two complementary ways, viz,
directly and indirectly. Besides the obvious, ``direct''
$\lambda-$dependence inherited from the Hamiltonian one must take
into account that the assignment $H \to \Theta(H)$ remains, at every
$\lambda$, ambiguous. Thus, indirectly, the removal of
this ambiguity also remains at our disposal as a phenomenologically
relevant {\em independent\,} degree of freedom in model-building
\cite{Geyer,lotor}.
The latter observation was a core of our present construction of the
general re-Hermitization scheme admitting the
analysis of the admissibility of the perturbations.

In diagram (\ref{findiag}) we showed, in particular, that it
is necessary to distinguish between as many as five relevant Hilbert
spaces, in principle at least.
In practice, the pattern admits
various simplifications. We succeeded in reducing
the set of the relevant spaces to three (cf. diagram
(\ref{dvacetosm})).
This renders the explicit perturbation-expansion constructions
of physical states possible. These constructions
are expected to be
based again on the most natural power-series ansatzs
sampled by Eq.~(\ref{20hat}).
Thus, our technique of description of the perturbed
unitary quantum systems do not seem to lead to any
inconsistencies.

\subsection{Outlook}

Any
realistic application of our perturbation-expansion recipe
will be strongly model-dependent so that it remained
beyond the scope of our present paper. We only paid detailed
attention to a few key technicalities. Under the assumption that the
$\lambda-$dependence of our Hamiltonians $H_\lambda$ happens to be
analytic we managed to show that the extension of the conventional
perturbation-expansion techniques to the ${\cal PT}-$symmetry
context is, up to a few important differences, straightforward. One
of the main explicit technical messages delivered by our present
paper is that whenever the unitary, 3HS-represented quantum system
with the Hamiltonian $H$ which is non-self-adjoint in ${\cal K}$ but
self-adjoint in ${\cal H}$ gets exposed to a perturbation
$\lambda\,W$ which is non-self-adjoint in ${\cal K}$,
there exists an appropriate
generalization of the textbook perturbation theory which may be
constructed in ${\cal K}$ and which gives the correct physical
predictions.

We clarified that the traditional notions of the
norm or of the smallness of perturbations in ${\cal L}$ cannot be
too easily transferred to the 3HS framework. Whenever the
conventional perturbed Hamiltonians $\mathfrak{h}_{\lambda}
=\mathfrak{h} + {\lambda} \mathfrak{v}_{\lambda}$ are not found
prohibitively complicated, the analysis of perturbations should
certainly be performed in the textbook space ${\cal L}$. Only when
this is not the case, a transition to the quasi-Hermitian picture
becomes a viable strategy.
Indeed, even though the non-Hermitian representation approach
proves reasonably straightforward and mathematically well founded,
it still remains not too easy to implement.

Our key message
concerns the criteria of the smallness of
perturbations.
In contrast to the conventional
Hermitian perturbation theories \cite{Messiah},
and in contrast to the
studies of the response to perturbations
in open quantum systems \cite{Viola},
a {\em decisive\,} component of {\em any\,}
mathematically
consistent perturbation-approximation description of {\em any\,}
closed crypto-Hermitian model lies in the
reconstruction of
the physical Hilbert-space metric $\Theta_\lambda$.
One encounters here
a hidden parallel with the Hermitian theories:
The ``operationally admissible''
perturbations $\lambda\,W(\lambda)$
are only those which are
self-adjoint in
the physical Hilbert space.
As long as the physical norm
(i.e.,
the metric)
necessarily {\em varies\,} with
the parameter $\lambda$ in general,
one {\em cannot\,}
confirm or disprove the smallness of a given
quasi-Hermitian perturbation $\lambda\,W(\lambda)$
too easily
(interested readers might
find a fairly discouraging illustrative example
in \cite{corridors}).

For the practical purposes this implies that
as long as the construction of $\Theta_\lambda$
remains an inseparable part of the 5HS formalism,
the
questions of stability of a given system with respect to random
perturbations (answered, in open systems, by the construction of the
pseudospectrum \cite{Viola,Trefethen}) do not seem to have any easy
or exhaustive
answer for the closed crypto-Hermitian quantum systems at present.
Mostly, only an {\it a posteriori}, self-consistent test of the
smallness of perturbation remains at our disposal.

A certain consolation may only be sought in the fact
that this is, after all,
a situation which is not too different
from the situation in the single-Hamiltonian setting.
Indeed,
the 5HS ambiguity of
the specification of the  ``admissible''
physical $\lambda \neq 0$ Hilbert space ${\cal H}_\lambda$
is merely inherited from
the unperturbed
3HS formalism where $\lambda=0$.
It can {\em always\,} be suppressed,
in principle at least, via
an
extended dynamical input
access to
a {\em complete\,}
set $\{\Lambda_j\}$ of the
non-Hermitian operators of observables
in ${\cal K}$.
Fortunately, citing the words of review
\cite{Geyer},
``it is not always necessary to construct the metric
for the whole set of
observables under consideration'' in applications.

\section{Summary\label{susu}}

In multiple applications
the Rayleigh-Schr\"{o}dinger perturbation series is
found convergent. According to the Kato's book \cite{Kato}, the
radius of the corresponding circle of
convergence $\lambda_{\max}$ is equal to the distance of
$\lambda=0$ from the (in general, complex)
set of the so called exceptional-point (EP) values
$\lambda^{(EP)}$. In this way the question about the ``smallness''
of perturbations $\lambda \mathfrak{v}$
is given a more or less exhaustive
abstract answer.

This answer appears not too constructive even in
Hermitian theory
where the EP values are never real.
This leads to the necessity of working
with the analytically continued Hamiltonians
which cease to be self-adjoint.
The less satisfactory the answer is when
we admit, in
a preferred Hilbert space ${\cal K}$,
a manifest non-Hermiticty of the preselected
perturbed
Hamiltonian $H_\lambda$
for which the energy spectrum may but need not be real.
In this sense, our present
study of crypto-Hermitian $H_\lambda$s
may be summarized as reopening the problem
of the smallness of perturbations,
and as its constructive and mathematically consistent reformulation.

Our starting point was the standard
3HS theory, the probabilistic
interpretation of which
is known to be
obtained via
a positive definite, Hamiltonian-dependent
physical
Hilbert-space metric.
%$\Theta_\lambda$.
We observed that once one admits
the $\lambda-$parametric variability
of the Hamiltonian $H_\lambda$,
the task of making the
physical contents of the theory consistent
becomes nontrivial even when the variability of $H_\lambda$
is only
caused by a ``small'' perturbation in $H_\lambda=H+\lambda W$
(in the unperturbed
limit
$\lambda \to 0$
we dropped the usual zero subscript as redundant).

A key theoretical obstacle
was found in the
transfer of the parameter-dependence from $H_\lambda$
to
all of the components of the formalism and, in particular,
to
the Dyson's map ($\Omega \to \Omega_\lambda$).
At length we discussed the resulting
parameter-dependence modifications of
the mathematical manipulation space (${\cal K} \to {\cal K}_\lambda$)
as well as of
the ultimate physics-representing
space (${\cal H} \to {\cal H}_\lambda$).
Thus, we had to
apply the 3HS formalism twice, viz., to the
unperturbed
Hamiltonian
$H$
and, separately, to its perturbed partner $H_\lambda$ associated with
the modified,
$\lambda-$subscripted Hilbert-space triplet
$\{{\cal L}_\lambda,{\cal K}_\lambda, {\cal H}_\lambda\}$.
Finally,
without any loss of generality we
postulated the $\lambda-$independence of the
textbook Hilbert spaces (${\cal L}_\lambda={\cal L}$).
This enabled us to
merge the two
respective
Dyson-inspired 3HS formulations of quantum mechanics
of Ref.~\cite{Geyer}
into a single 5HS flowchart.

It appeared not too surprising that in the resulting
constructive
recipe
a number of subtle
technical as well as physical interpretation problems emerged
(one of them, with $\lambda_{\max}=0$,
was recently reported in \cite{corridors}).
For this reason, just
a smooth jump from $\lambda=0$ to $\lambda \neq 0$
was considered, and
some of the consequences were discussed in detail.
On this background it may be concluded that the
``non-Hermitian'' (or, more precisely, ``quasi-Hermitian``
\cite{Geyer} or ``crypto-Hermitian'' \cite{Smilga}) 5HS
formulation of perturbation theory remains nontrivial but
still feasible and reasonably transparent and user-friendly.

\newpage

\end{document}